\DeclareMathAlphabet{\mathpzc}{OT1}{pzc}{m}{it}
\newcommand{\tuple}[1]{( #1 )}
\newcommand{\bigtuple}[1]{\big( #1 \big)}
\newcommand{\abs}[1]{\left| #1 \right|} % absolute value
\newcommand{\NN}{\mathbb{N}}    
\newcommand{\EE}{\mathbb{E}} 
\newcommand{\A}{\mathcal{A}}
\newcommand{\B}{\mathcal{B}}
\newcommand{\samp}{\leftarrow}
\newcommand{\lruns}{\leftarrow}
\newcommand{\bin}{ \{ 0,1 \} }
\newcommand{\secpar}{\kappa}
\newcommand{\msg}{\mathsf{M}}
\newcommand{\seed}{\mathsf{seed}}
\newcommand{\Adv}{\mathsf{Adv}}
\newcommand{\PKE}{\mathsf{PKE}}
\newcommand{\KEM}{\mathsf{KEM}}
\newcommand{\Setup}{\mathsf{Setup}}
\newcommand{\Gen}{\mathsf{Gen}}
\newcommand{\Enc}{\mathsf{Enc}}
\newcommand{\Encaps}{\mathsf{Encaps}}
\newcommand{\Decaps}{\mathsf{Decaps}}
\newcommand{\EncA}{\Enc^{\sf i}}
\newcommand{\EncB}{\Enc^{\sf d}}
\newcommand{\Dec}{\mathsf{Dec}}
\newcommand{\Ext}{\mathsf{Ext}}
\newcommand{\KeySpace}{\mathcal{K}}
\newcommand{\MsgSpace}{\mathcal{M}}
\newcommand{\RandSpace}{\mathcal{R}}
\newcommand{\CtSpace}{\mathcal{C}}
\newcommand{\CtSpaceSin}{\CtSpace_{\sf single}}
\newcommand{\MR}{\mathsf{m}}
\newcommand{\mPKE}{\MR\PKE}
\newcommand{\mKEM}{\MR\KEM}
\newcommand{\mSetup}{\MR\Setup}
\newcommand{\mSetupp}{\MR\Setup^{\sf p}}
\newcommand{\mGen}{\MR\Gen}
\newcommand{\mGenp}{\MR\Gen^{\sf p}}
\newcommand{\mEnc}{\MR\Enc}
\newcommand{\mEncA}{\MR\EncA}
\newcommand{\mEncB}{\MR\EncB}
\newcommand{\mDec}{\MR\Dec}
\newcommand{\mExt}{\MR\Ext}
\newcommand{\mExtp}{\MR\Ext^{\sf p}}
\newcommand{\mEncaps}{\MR\Encaps}
\newcommand{\mDecaps}{\MR\Decaps}
\newcommand{\OD}{\mathcal{D}} %oracle decapsulation
\newcommand{\rand}{\mathsf{r}}
\newcommand{\randnpk}{\rand_0}
\newcommand{\ct}{\mathsf{ct}}
\newcommand{\vct}{\vec{\sf ct}}
\newcommand{\hct}{\widehat{\sf ct}}
\newcommand{\ctnpk}{\ct_0}
\newcommand{\pp}{\mathsf{pp}}
\newcommand{\pk}{\mathsf{pk}}
\newcommand{\sk}{\mathsf{sk}}
\newcommand{\skp}{\mathsf{sk}^{\sf p}}
\newcommand{\key}{\mathsf{K}}
\newcommand{\state}{\mathsf{state}}
\newcommand{\IND}{\mathsf{IND}\text{-}}
\newcommand{\CPA}{\mathsf{CPA}}
\newcommand{\CCA}{\mathsf{CCA}}
\newcommand{\INDCPA}{\IND\CPA}
\newcommand{\INDCCA}{\IND\CCA}
\newcommand{\BIND}{{\B_{\sf IND}}}
\newcommand{\OG}{\mathsf{G}}
\newcommand{\OH}{\mathsf{H}}
\newcommand{\qD}{q_\OD}
\newcommand{\qG}{q_\OG}
\newcommand{\qH}{q_\OH}
\newcommand{\makeAlph}[1]{\@alph{#1}}
\newcolumntype{C}[1]{>{\centering\arraybackslash$}p{#1}<{$}}
\newcommand{\bs}{\boldsymbol{\mathrm{s}}}
\newcommand{\br}{\boldsymbol{\mathrm{r}}}
\newcommand{\be}{\boldsymbol{\mathrm{e}}}
\newcommand{\bu}{\boldsymbol{\mathrm{u}}}
\newcommand{\bt}{\boldsymbol{\mathrm{t}}}
\newcommand{\bA}{\boldsymbol{\mathrm{A}}}
\newcommand{\buone}{\bA^{T}\br+\be_{1}}
\newcommand{\coloneqq}{:=}
\DeclareMathOperator{\diag}{diag}
 \DeclareMathOperator{\E}{E}
\newlength{\myeqskip}  \setlength{\myeqskip}{5pt}
\def\expandafter\normalsize\expandafter{%
    \normalsize%
    \setlength\abovedisplayskip{\myeqskip}%
    \setlength\belowdisplayskip{\myeqskip}%
    \setlength\abovedisplayshortskip{\myeqskip}%
    \setlength\belowdisplayshortskip{\myeqskip}%
}
\begin{document}
\emergencystretch 3em
%

%\title{P$_\ell$-Kyber: Packing $\ell$ Plaintexts and Lattice Coding for Kyber}

\title{Compact Lattice-Coded (Multi-Recipient) Kyber without CLT Independence Assumption}

\author{{Shuiyin} {Liu}\inst{1}\orcidID{0000-0002-3762-8550} \and
{Amin} {Sakzad}\inst{2}\orcidID{0000-0003-4569-3384} }

\authorrunning{S. Liu and A. Sakzad}
% First names are abbreviated in the running head.
% If there are more than two authors, 'et al.' is used.
%
\institute{ {Holmes
Institute}, {Melbourne, VIC 3000},  {Australia}
\\ \email{SLiu@holmes.edu.au}\\
\and
{Monash University}, {Melbourne, VIC 3800}, {Australia}\\
\email{Amin.Sakzad@monash.edu}}

\maketitle              % typeset the header of the contribution
\begin{abstract}
This work presents a joint design of encoding and encryption procedures for public key encryptions (PKEs) and key encapsulation mechanism (KEMs) such as Kyber, without relying on the assumption of independent decoding noise components, achieving reductions in both communication overhead (CER) and decryption failure rate (DFR). Our design features two techniques: ciphertext packing and lattice packing. First, we extend the Peikert-Vaikuntanathan-Waters (PVW) method to Kyber: $\ell$ plaintexts are packed into a single ciphertext. This scheme is referred to as P$_\ell$-Kyber. We prove that the P$_\ell$-Kyber is IND-CCA secure under the M-LWE hardness assumption. We show that the decryption decoding noise entries across the $\ell$ plaintexts (also known as layers) are mutually independent. Second, we propose a cross-layer lattice encoding scheme for the P$_\ell$-Kyber, where every $\ell$ cross-layer information symbols are encoded to a lattice point. This way we obtain a \emph{coded} P$_\ell$-Kyber, where the decoding noise entries for each lattice point are mutually independent. Therefore, the DFR analysis does not require the assumption of independence among the decryption decoding noise entries. Both DFR and CER are greatly decreased thanks to ciphertext packing and lattice packing. We demonstrate that with $\ell=24$ and Leech lattice encoder, the proposed coded P$_\ell$-KYBER1024 achieves DFR $<2^{-281}$ and CER $ = 4.6$, i.e., a decrease of CER by $90\%$ compared to KYBER1024. Additionally, for a fixed plaintext size matching that of standard Kyber ($256$ bits), we introduce a truncated variant of P$_\ell$-Kyber that deterministically removes ciphertext components carrying surplus information bits. Using $\ell=8$ and E8 lattice encoder, we show that the proposed truncated coded P$_\ell$-KYBER1024 achieves a $10.2\%$ reduction in CER and improves DFR by a factor of $2^{30}$ relative to KYBER1024. Finally, we demonstrate that constructing a multi-recipient PKE and a multi-recipient KEM (mKEM) using the proposed truncated coded P$_\ell$-KYBER1024 results in a $20\%$ reduction in bandwidth consumption compared to the existing schemes.

\keywords{Module leaning with errors \and public key encryption \and Ciphertext packing \and Lattice packing \and Key Encapsulation Mechanisms \and Ciphertext expansion \and Multi-Recipient}
\end{abstract}
\vspace{-1mm} 
\section{Introduction}
\vspace{-1mm} 
In August 2024, National Institute of Standards and Technology (NIST) has published the final post-quantum cryptography standards for digital-signature, encryption, and key-encapsulation mechanisms (KEM). CRYSTALS-Kyber is the only post-quantum KEM standardised by NIST \cite{NISTpqcfinal2024}. In February 2024, Apple has announced that its iMessage is going to use Kyber \cite{ApplePQ32024}, where sender devices generate post-quantum encryption keys using the receiver’s public keys. Kyber is a lattice-based cryptographic algorithm built upon the module-learning with errors (M-LWE) problem. Unlike traditional KEMs like Elliptic-curve Diffie–Hellman (ECDH), the Kyber algorithm results in much larger ciphertext size (e.g., up to $49$ times larger), which necessitates more storage, increased memory usage, and greater demand for network bandwidth. Later, Facebook reported an increase of about $40\%$ in CPU cycles after implementing Kyber, compared with its current ECDH \cite{Meta2024}. For the upcoming large-scale deployment phase, it is crucial to enhance Kyber in order to reduce its storage, memory usage, and communication bandwidth. In early 2025, NIST announced adaptation of HQC as another KEM built upon code-based cryptographic assumptions.

Kyber’s encryption and decryption processes can be viewed as a noisy communication channel with binary Pulse Amplitude Modulation (2-PAM) \cite{LWEchannel2022}. Recent coding schemes aim to encode more bits, reducing the ciphertext expansion rate (CER). For instance, \cite{liu2023lattice} replaces 2-PAM with a Leech lattice constellation, achieving a $32.6\%$ CER reduction. \cite{LWEchannel2022} proposes a 5-PAM Q-ary BCH encoding, cutting CER by $45.6\%$. To the best of our knowledge, \cite{liu2024SCKyber} achieves the best results so far. That is a $54\%$ CER reduction by transforming Kyber’s processes into a Gaussian channel, encoding a $638$-bit secret in a single ciphertext. These methods rely on an independence assumption for decryption noise, using the central limit theorem to model noise as Gaussian. While valid for higher dimensions, this assumption may underestimate the decryption failure rate (DFR), as the noise entries are actually dependent.

Kyber uses a lossy compression (quantization) function to reduce ciphertext size, increasing decoding noise. Efforts to reduce quantization noise focus on minimizing channel noise. \cite{NewHope2016} applied $\mathsf{D_4}$-lattice quantization to Ring-LWE (R-LWE), and \cite{MLWEE82021} extended this to M-LWE with $\mathsf{E_8}$-lattice quantization. \cite{shuiyin2024} proposed a lattice quantization framework for Kyber, reducing the CER by $36.47\%$ and DFR by a factor of $2^{99}$. \cite{liu2024SCKyber} showed that Lloyd-Max quantization minimizes mean squared error (MMSE) for M-LWE samples. However, the DFR analysis in \cite{shuiyin2024}\cite{liu2024SCKyber} still assumes independence in decoding noise entries.

The independence assumption mentioned above is closely related to the Central Limit Theory (CLT): the dominant decoding noise components are assumed to be i.i.d. Gaussian random variables. The independent/CLT assumption also appears in the wider literature on (R/M-) LWE based cryptosystem and fully homomorphic encryption (FHE). In R-LWE, the independent assumption was applied to the decoding noise entries of NewHope-Simple \cite{NewhopeECC2018}. In LWE, the Gaussian approximation was applied to the decoding noise entries of FrodoKEM \cite{FrodoCong2022}. The CLT assumption was also used in the homomorphic encryption (HE) schemes \cite{TFHE2018}\cite{CLTRLWE2024}. An open question is how to create a (R/M-) LWE based \emph{coded} cryptosystem that does not depend on the independence assumption on the decoding noise entries.
%to approximate the noise distribution

Apart from the coding approach, an alternative way of reducing the CER is ciphertext packing \cite{PVW2008} (also well known as PVW packing), which was originally proposed for LWE. The authors show that a single LWE ciphertext vector (i.e., the first part of the ciphertext) can be securely reused to encrypt multiple ciphertexts 
(i.e., $1$-bit messages)  by employing various secret-key vectors. The resulting cryptosystem is much more efficient than Regev’s scheme \cite{Regev05}, since the CER can be made as small as a constant by packing many plaintexts. This method has been used to construct LWE based HE schemes \cite{homomorphiLWE2011}\cite{BGH2013} and KEM scheme \cite{FrodoKEM2021}. The downside of ciphertext packing is the increased DFR, due to a union bound probability of decryption failure for each plaintext. To the best of the authors' knowledge, the ciphertext packing has not been introduced to M-LWE based KEMs like Kyber. It would be interesting to see how the ciphertext packing method affects the DFR analysis, CER, and security level of Kyber.

%\begin{comment}
Although Kyber is originally specified to encapsulate a $256$-bit secret, many key encapsulation mechanisms—particularly those used in contexts involving Forward Secrecy (FS) and Key Rotation—frequently exchange raw shared secrets exceeding $256$ bits. For instance, TLS 1.3 employs P-$384$ for FS, yielding a $384$-bit raw secret prior to key derivation \cite{rfc8446}. Apple’s PQ3 protocol employs a hybrid key exchange scheme that combines ECC and Kyber, producing a concatenated raw shared secret of approximately 512 bits prior to input to the key derivation function (KDF) \cite{ApplePQ32024}. From a multi-key derivation perspective, many protocols require derivation of several symmetric keys (e.g., AES key, HMAC key, IV) from a single exchange. Even when only a $256$-bit AES key is ultimately used, a larger raw secret enhances KDF resilience and entropy distribution. From a theoretical perspective, \cite{LWEchannel2022} demonstrated that encoding across four standard Kyber ciphertexts can yield a 2214-bit secret. More recently, \cite{liu2024SCKyber} showed that it is feasible to embed two AES keys within a single standard Kyber ciphertext. These findings suggest the potential for constructing more compact M-LWE-based KEMs than the current Kyber scheme.  Accordingly, we argue that extending Kyber to support the encapsulation of larger secrets is both practically meaningful and theoretically significant for modern cryptographic applications.
%\end{comment}

The standard single-recipient KEM can be generalized to support multiple recipients (mKEM) by accepting multiple public keys as input \cite{MR2005}. This is useful in scenarios where the same session key $\mathbf{m}$ needs to be securely shared with a group of recipients. A key advantage of mKEM is its reduced ciphertext overhead compared to the naive approach of performing individual encryptions for each recipient. In \cite{MR_PKE2020}, Kyber-based mPKE and mKEM schemes were introduced, wherein the session key $\mathbf{m}$ is encapsulated using multiple public keys and combined into a single ciphertext. This approach achieves a significant reduction in ciphertext size relative to the naive method. Moreover, any further reduction in the ciphertext size of the underlying Kyber scheme directly translates to a corresponding reduction in the ciphertext size of the Kyber-based mKEM.

Our main contribution is to develop a M-LWE based KEM with a very low CER (e.g., CER $<5$), where its DFR analysis does not depend on the assumption of independence among the decryption decoding noise entries. Our design leverages two techniques: ciphertext packing and lattice packing. The former greatly reduces the CER, while the latter is effective at decreasing the DFR (i.e., compensating the drawback of ciphertext packing). Below we summarize the means that we achieve this: 
\begin{itemize}
\item  We first propose a packed version of Kyber: $\ell$
plaintexts are packed into a single ciphertext. This scheme is referred to as P$_\ell$-Kyber. We prove that the P$_\ell$-Kyber is IND-CCA secure under the M-LWE hardness assumption. We show that the decryption decoding noise entries across the $\ell$ plaintexts (also known as layers) are mutually independent. We also propose a cross-layer lattice encoding scheme for the P$_\ell$-Kyber, where every $\ell$ cross-layer information symbols are encoded to a lattice point. This way we obtain a \emph{coded} P$_\ell$-Kyber, which takes the advantages of both ciphertext packing and lattice packing. An upper bound on the DFR of {coded} P$_\ell$-Kyber is derived, which can be verified numerically. We demonstrate that with $\ell=24$ and Leech lattice encoder, the proposed coded P$_\ell$-KYBER1024 achieves DFR $\leq 2^{-281}$ and CER $=4.6$ (see Table~\ref{KyberP_LC}). 
\item Secondly, for a fixed plaintext size equivalent to that of standard Kyber ($256$ bits), we propose a truncated variant of P$_\ell$-Kyber that deterministically eliminates ciphertext components conveying redundant information bits. Employing $\ell = 8$ in conjunction with the \text{E8} lattice encoder, the proposed truncated coded P$_\ell$-KYBER1024 achieves a $10.2\%$ reduction in the CER and yields a DFR improvement by a factor of $2^{30}$ compared to KYBER1024. Additionally, we finally demonstrate that implementing a multi-recipient KEM (mKEM) based on the proposed truncated coded P$_\ell$-KYBER1024 achieves a $20\%$ reduction in bandwidth usage compared to existing mKEM schemes. 
\end{itemize}

\noindent A summary of our main results is provided in Table \ref{sum_contribution} for convenience.

\begin{table}[tbh]
%\addtolength{\tabcolsep}{-1pt}
\centering
\caption{Variants of KYBER1024 for encrypting $\tau$ AES Keys by packing $\ell$ ciphertexts. \\{ $N$: total plaintext size (in bytes), $M$: total ciphertext size (in bytes), $\delta$: DFR, $\rho$: CER} }
\label{sum_contribution}
\centering
\vspace{-5mm}
\begin{threeparttable}
\smallskip\noindent
\resizebox{\linewidth}{!}{
\begin{tabular}{|c||c|c|c||c|c|c|}
\hline 
DFR Analysis& \multicolumn{3}{c||}{CLT}  & \multicolumn{3}{c|}{Numerical}  \\ \hline
Scheme & \cite{liu2023lattice} & \cite{LWEchannel2022}  & \cite{liu2024SCKyber}   & \cite{NISTpqcfinal2024} & \multicolumn{2}{c|}{This work}\\ \hline
Encoder & Lattice\footnotemark[2] & Q-BCH  & Binary-BCH   & Uncoded & Uncoded & Lattice\footnotemark[2] \\ \hline
 \makecell{MMSE\footnotemark[1] \\ Quantizer} & No & No   &  Yes   & No  & Yes & Yes \\ \hline
\makecell{$\tau=1$ \\($1$ AES key)} & \makecell{ $N=32$\\$ M=1184$\\ $\ell=1$  \\$\delta =2^{-213}$  \\  $\rho =37$} & \makecell{ $N=58$ \\ $ M=1568$ \\ $\ell=1$ 
 \\$\delta <2^{-174}$  \\  $\rho =26.6$} & -   & \makecell{ $N=32$\\ $ M=1568$\\ $\ell=1$ \\$\delta =2^{-174}$  \\  $\rho =49$} & \makecell{ $N=32$\\ $ M=1568$ \\ $\ell=1$  \\ $\delta =2^{-190}$  \\  $\rho =49$}  & \makecell{ $N=32$\\ $ M=1408$ \\ $\ell=8$ \\ $\delta =2^{-204}$  \\  $\rho =44$} \\ \hline
\makecell{$\tau=2$ \\ ($2$ AES keys)} & - & - & \makecell{ $N=79$\\ $ M=1792$ \\ $\ell=1$ \\$\delta =2^{-174}$  \\  $\rho =22.5$}   & - & \makecell{ $N=64$\\ $ M=1728$ \\ $\ell=2$ \\ $\delta =2^{-189}$  \\  $\rho =27$}  & \makecell{ $N=64$\\ $ M=1536$ \\ $\ell=8$ \\ $\delta =2^{-203}$  \\  $\rho =24$} \\ \hline
\makecell{$\tau=8$  \\ ($8$ AES keys)}  & - & \makecell{ $N=276$\\ $M=6272$ \\ $\ell=4$ \\$\delta <2^{-174}$  \\  $\rho =22.7$} & - & - & \makecell{ $N=256$\\ $ M=2688$ \\ $\ell=8$ \\$\delta =2^{-187}$ \\  $\rho =10.5$} & \makecell{ $N=256$\\$ M=2688$ \\ $\ell=8$ \\ $\delta =2^{-336}$ \\  $\rho =10.5$}\\ \hline
% \makecell{$\tau=20$  \\ ($20$ AES keys)}  & - & - & - & - & \makecell{ $N=640$\\ $ M=4608$\\ $\ell=20$ \\ $\delta =2^{-185}$ \\  $\rho =7.2$} & \makecell{ $N=640$\\ $ M=3968$\\ $\ell=16$ \\ $\delta =2^{-306}$  \\  $\rho =6.2$}\\ \hline
% \makecell{$\tau=36$  \\ ($36$ AES keys)} & - & - & - & - & \makecell{ $N=1152$\\$ M=7168$\\ $\ell=36$\\ $\delta =2^{-184}$ \\  $\rho =6.3$}  & \makecell{ $N=1152$\\ $ M=5248$\\ $\ell=24$ \\$\delta =2^{-281}$ \\  $\rho =4.6$}\\ \hline
\end{tabular}}
 \begin{tablenotes}
       \item [1] MMSE quantization is defined in Definition \ref{def:MMSE}.
       \item [2] Lattice coding principles are detailed in Section~\ref{Sec:Lattice_def}. The encoding scheme for $1$–$2$ AES keys is described in Section~\ref{sec: TLC}, while those for $8$–$36$ AES keys are presented in Section~\ref{sec: LC}.
     \end{tablenotes}
\end{threeparttable}
\vspace{-6mm}
\end{table}

\section{Preliminaries}
In this section, we set the notations, provide the definitions and background on coding techniques. We further provide Kyber algorithms and identify the gaps in analysis regarding the independence assumptions used in central limit theorem (CLT) in various prior works.
\subsection{Notation and Definitions} \label{sec:nd}
\vspace{-1mm}
\emph{Rings:} Let $R_q$ denote the polynomial ring  $\mathbb{Z}_q[X]/(X^{n}+1)$, where  $n = 256$  and  $q = 3329$  in this setting. Elements of $R_q$ are represented by regular font letters, while vectors of coefficients in  $R_q$  are denoted by bold lowercase letters. Matrices and vectors are indicated by bold uppercase and lowercase letters, respectively. The transpose of a vector $\mathbf{v}$ or a matrix  $\mathbf{A}$  is represented as $\mathbf{v}^T$  or  $\mathbf{A}^T$ , respectively. By default, vectors are treated as column vectors.

\emph{Sampling and Distribution:} For a set \( \mathcal{S} \), we use the notation \( s \leftarrow \mathcal{S} \) to indicate that \( s \) is chosen uniformly at random from \( \mathcal{S} \). If \( \mathcal{S} \) represents a probability distribution, this means \( s \) is chosen according to that distribution. This notation is extended coefficient-wise to a polynomial \( f(x) \in R_q \) or a vector of such polynomials. Let \( x \) be a bit string and \( S \) be a distribution that takes \( x \) as input. We express \( y \sim S \coloneqq \mathsf{Sam}(x) \) to mean that the output \( y \) generated by the distribution \( S \) using input \( x \) can be extended to any desired length. We define \( \beta_{\eta} = B(2\eta, 0.5) - \eta \) as the central binomial distribution over \( \mathbb{Z} \). The Cartesian product of two sets \( A \) and \( B \) is represented as \( A \times B \). We denote \( A \times A \) as \( A^2 \).

\emph{Compression and Quantization:} Given \( x \in \mathbb{R} \), the notation \( \left\lceil x \right\rfloor \) refers to rounding \( x \) to the nearest integer, with ties rounded up. The operations \( \left\lfloor x \right\rfloor \) and \( \left\lceil x \right\rceil \) denote rounding \( x \) down and up, respectively. Now, considering \( x \in \mathbb{Z}_q \) and \( d \in \mathbb{Z} \) such that \( 2^d < q \),  Kyber compression and decompression functions are \cite{NISTpqcfinal2024}:
\begin{align}
x'&=\mathsf{Compress}_{q}(x,d)=\lceil (2^{d}/q)\cdot x\rfloor \mod 2^{d},\nonumber\\
\hat{x}&=\mathsf{Decompress}_{q}(x',d)=\lceil (q/2^{d})\cdot x'\rfloor \in \mathcal{C}.\label{ComDecom}
\end{align}
Kyber compression and decompression operations can be interpreted as a mapping from a large set $\mathbb{Z}_q$ to a smaller set $\mathcal{C}$ with $|\mathcal{C}|=2^d < q$. In the literature of signal processing, this mapping is generally known as \emph{quantization}.

\begin{definition}[Scalar Quantization] \label{def:quan} Given a random variable \( x \in \mathbb{Z}_q \) and an integer \( L > 0 \), a scalar quantization \( Q_L \) divides the support of \( x \) into \( L \) subsets \( R_1, \ldots, R_L \), referred to as quantization regions \( T_L = \bigcup_{i=1}^L R_i \). Each region \( R_j \) is associated with a quantizer \( \alpha_j \in \mathcal{C}_L \). When \( x \) lies within the region \( R_j \), the quantization \( Q_L \) maps \( x \) to the point \( \hat{x} = \alpha_j \). $Q_L$ can be viewed as a function:
\begin{equation}
Q_L: \mathbb{Z}_{q} \rightarrow \mathcal{C}_L,
\end{equation}
where $Q_{L}({x}, \mathcal{C}_L, T_L):=\hat{x}$ can be uniquely represented by its index in $\mathcal{C}_L$, denoted as $\mathsf{Index}_L(\hat{{x}})$, i.e., $\mathcal{C}_L(\mathsf{Index}_L(\hat{\mathbf{x}})) = \hat{\mathbf{x}}$.
The communication cost of transmitting $\hat{x}$ reduces to $\log_2(L)$ bits.
\end{definition}

\begin{comment}
    \begin{equation}
\mathcal{C}_L(\mathsf{Index}_L(\hat{\mathbf{x}})) = \hat{\mathbf{x}}. \label{ind_c}
\end{equation}
\end{comment}

For consistency, with $L=2^d$, the Kyber quantization in \eqref{ComDecom} is redefined as
\begin{align}
\hat{x}&=Q_{\mathsf{Kyber}, 2^d}({x})=\mathsf{Decompress}_{q}(\mathsf{Compress}_{q}(x,d),d) \nonumber \\
x'&=\mathsf{Index}_{2^d}(\hat{{x}})=\mathsf{Compress}_{q}(x,d). \label{k_quan}
\end{align}

\begin{comment}
 \begin{align}
\hat{x}&=Q_{\mathsf{Kyber}, 2^d}({x})=\mathsf{Decompress}_{q}(\mathsf{Compress}_{q}(x,d),d) \nonumber \\
x'&=\mathsf{Index}_{2^d}(\hat{{x}})=\mathsf{Compress}_{q}(x,d) = \lceil (2^{d}/q)\cdot x\rfloor \mod 2^{d}. \label{k_quan}
\end{align}
\end{comment}

\begin{definition}[MMSE Quantization] \label{def:MMSE} 
The optimal quantization should minimize the mean squared quantization error (MMSE):
\begin{equation}
(\mathcal{C}_L, T_L) = \arg \min_{\mathcal{C}'_L \in \mathbb{R}^n, T_L'\subset \mathbb{R}^n }\mathsf{E}(\|\mathbf{x}-Q_{L}(\mathbf{x}, \mathcal{C}'_L, T_L')\|^2). \label{MMSE}
\end{equation}
For simplicity of notation, we define the MMSE quantization as
\begin{equation}
\hat{\mathbf{x}}=Q_{\mathsf{MMSE},L}(\mathbf{x}).
\end{equation}
\end{definition}
The MMSE scalar quantization is the Lloyd-Max quantization \cite{Lloy1982}.

\begin{definition}[Moment Generating Function]\label{def:MGF}
Let $X \leftarrow D$ be a random variable. For $\theta \in \mathbb{R}$, the moment generating function (MGF) of $X$ is denoted by
\begin{equation}
    M_{X}(\theta)=\E(\exp(\theta X)). \label{MGF}
\end{equation}
\end{definition}

\begin{definition}[Algebraic Expression of a Column Ring Vector]\label{def:ME}
A column ring vector \( \mathbf{v} \in R_q^\ell \) is defined as:
\[
\mathbf{v} =[v_0(x),v_1(x),\ldots,v_{\ell-1}(x)]^T,
\; \text{where } v_i(x) = \sum_{j=0}^{n-1} v_{i,j} x^j \in R_q, \quad v_{i,j} \in \mathbb{Z}_q.
\]
Define the mapping function $\phi: R_q^\ell \to \mathbb{Z}_q^{\ell \times n}$:
\[
\phi(\mathbf{v}) =
\begin{bmatrix}
v_{0,0} & v_{0,1} & \cdots & v_{0,n-1} \\
v_{1,0} & v_{1,1} & \cdots & v_{1,n-1} \\
\vdots & \vdots & \ddots & \vdots \\
v_{\ell-1,0} & v_{\ell-1,1} & \cdots & v_{\ell-1,n-1}
\end{bmatrix}.
\]
The function \( \phi \) extracts the coefficients of each ring element \( v_i(x) \) and arranges them as the \( i \)-th row of \( \phi(\mathbf{v}) \in \mathbb{Z}_q^{k \times n} \). The inverse mapping \( \phi^{-1} : \mathbb{Z}_q^{\ell \times n} \rightarrow R_q^\ell \) reconstructs the ring vector from its coefficient matrix: $\mathbf{v}=\phi^{-1}(\phi(\mathbf{v}))$.
\end{definition}

\subsection{Lattice Code, Encoder, and Decoder} \label{Sec:Lattice_def}
\begin{definition}[Lattice]
An \(\ell\)-dimensional lattice \(\mathsf{\Lambda}\) is a discrete additive subgroup of \(\mathbb{R}^M\) with \(M \geq \ell\). Given \(\ell\) linearly independent column vectors \(\mathbf{b}_1, \ldots, \mathbf{b}_\ell \in \mathbb{R}^M\), the lattice generated by these vectors is defined as:
\begin{equation*}
\mathsf{\Lambda} =\mathcal{L}(\mathbf{B})=  \left \{\sum_{i=1}^{\ell} z_i\mathbf{b}_i|z_i\in\mathbb{Z} \right\},
\end{equation*}
where $\mathbf{B}=[\mathbf{b}_{1},\ldots,%
\mathbf{b}_{\ell}]$ is referred to as a generator matrix of $\mathsf{\Lambda}$.
\end{definition}

\begin{definition}[Lattice Code]
A lattice code \(\mathcal{C}(\mathsf{\Lambda}, \mathcal{P})\) is the finite collection of points in \(\mathsf{\Lambda}\) that fall within the bounded set \(\mathcal{P}\):
\begin{equation*}
\mathcal{C}(\mathsf{\Lambda}, \mathcal{P}) = \mathsf{\Lambda} \cap \mathcal{P}.
\end{equation*}
If $\mathcal{P}=\mathbb{Z}_p^\ell$, the code $\mathcal{C}(\mathsf{\Lambda}, \mathbb{Z}_p^\ell)$ is said to be generated from hypercube shaping (HS).
\end{definition}

\begin{definition}[CVP Decoder]
Given \(\mathbf{y} \in \mathbb{R}^\ell\), the Closest Vector Problem (CVP) decoder returns the nearest lattice vector to \(\mathbf{y}\) within the lattice \(\mathcal{L}(\mathbf{B})\):
\begin{equation*}
\mathbf{x}=\mathsf{CVP}(\mathbf{y}, \mathcal{L}(\mathbf{B}))=\arg\min_{\mathbf{x}' \in \mathcal{L}(\mathbf{B})} \| \mathbf{x}'-\mathbf{y} \|.
\end{equation*}
\end{definition}

\begin{definition}[HS Encoder \cite{FrodoCong2022}]\label{def:HS}
Let \(\mathbf{B} = \mathbf{U} \cdot \diag(\pi_1, \ldots, \pi_\ell) \cdot \mathbf{U}'\) be the Smith Normal Form (SNF) factorization of a lattice basis \(\mathbf{B}\), where \(\mathbf{U}\) and \(\mathbf{U}'\) are unimodular matrices in \(\mathbb{Z}^{\ell \times \ell}\). Let the message space be
\begin{equation}
\mathcal{M}_{p,\ell}=\left\{ 0,1,\ldots,p/\pi_1-1\right\} \times \cdots \times
\left\{ 0,1,\ldots,p/\pi_\ell-1\right\} , \label{m_p}
\end{equation}%
where $p > 0$ is a common multiple of $\pi_1, \ldots, \pi_\ell$. Given an input \(\mathbf{m} \in \mathcal{M}_{p, \ell}\) and $\hat{\mathbf{B}}= \mathbf{U}\cdot \diag (\pi_1, \ldots, \pi_\ell)$, a HS encoder produces a codeword \(\mathbf{x} \in \mathcal{C}(\mathcal{L}(\mathbf{B}), \mathbb{Z}_p^\ell)\):
\begin{equation}
\mathbf{x} = \hat{\mathbf{B}}\mathbf{m} \bmod p , \label{HC_shaping}
\end{equation}
\end{definition}

\begin{definition}[HS CVP Decoder \cite{FrodoCong2022}]\label{def:HS-CVP}
Given a lattice \(\mathcal{L}(\mathbf{B})\) in \(\mathbb{R}^M\) and an input vector \(\mathbf{y} \in \mathbb{R}^M\), the HS CVP decoder outputs an estimated message \(\hat{\mathbf{m}} = [\hat{m}_1, \ldots, \hat{m}_\ell]^T \in \mathcal{M}_{p, \ell}\):
\begin{align}
\hat{\mathbf{m}} &=\mathsf{{CVP}_{HS}}(\mathbf{y}, \mathcal{L}(\mathbf{B}))  \notag \\ &=\hat{\mathbf{B}}^{-1}\cdot\mathsf{CVP}(\mathbf{y}, \mathcal{L}(\mathbf{B})) \bmod (p/\pi_1, \ldots, p/\pi_\ell), \label{HC_decoding}
\end{align}%
where $\hat{m}_i= (\hat{\mathbf{B}}^{-1}\mathsf{CVP}(\mathbf{y}, \mathcal{L}(\mathbf{B}))_i \bmod p/\pi_i$, for $i= 1,\ldots, \ell$.
\end{definition}

\subsection{Cryptographic Definitions}

\begin{definition}[M-LWE Problem \cite{Kyber2018}]\label{def:MLWE_P}
The M-LWE samples \( (\mathbf{a}_i, b_i = \mathbf{a}_i^T \mathbf{s} + e_i) \) are drawn from the M-LWE distribution \( A_{\mathbf{s}, \beta} \) over \( R_q^k \times R_q \). Here, \(\mathbf{a}_i \leftarrow R_q^k\) is chosen uniformly, \(\mathbf{s} \leftarrow \beta_\eta^k\) is common to all samples, and \(e_i \leftarrow \beta_\eta\) is independent for each sample. Given \(m\) M-LWE samples, the decision-M-LWE problem involves distinguishing \( A_{\mathbf{s}, \beta} \) from the uniform distribution on \( R_q^k \times R_q\), while the search-M-LWE problem seeks to recover the secret \(\mathbf{s}\). For an algorithm $\mathsf{A}$, we define the advantage of an adversary as $\mathsf{Adv}^{\mathsf{M-LWE}}_{m, k, \eta}(\mathsf{A}) =$
\begin{equation*}
\left |\Pr \left(b'=1 : \begin{array}{c}
\mathbf{A} \leftarrow R_q^{m \times k} ; (\mathbf{s}, \mathbf{e})\leftarrow \beta^{k}_\eta \times \beta^{m}_\eta\\
\mathbf{b}=\mathbf{A}\mathbf{s} +\mathbf{e}; b' \leftarrow \mathsf{A}(\mathbf{A}, \mathbf{b}) \end{array}\right) - \Pr \left( b'=1 : \begin{array}{c} 
\mathbf{A} \leftarrow R_q^{m \times k} ; \\
\mathbf{b} \leftarrow R_q^{m}; b' \leftarrow \mathsf{A}(\mathbf{A}, \mathbf{b}) \end{array} \right) \right |
\end{equation*}
\end{definition}

\begin{definition}[Public-Key Encryption (PKE) \cite{Kyber2018}]
A public-key encryption scheme \( \mathsf{PKE} = (\mathsf{KeyGen}, \mathsf{Enc}, \mathsf{Dec}) \) consists of a triple of probabilistic algorithms along with a message space \(\mathcal{M}\). The key-generation algorithm \(\mathsf{KeyGen}\) produces a pair \((pk, sk)\), which includes a public key and a secret key. The encryption algorithm \(\mathsf{Enc}\) takes the public key \(pk\) and a message \(m \in \mathcal{M}\) to generate a ciphertext \(c\). Finally, the deterministic decryption algorithm \(\mathsf{Dec}\) uses the secret key \(sk\) and the ciphertext \(c\) to output either a message \(m \in \mathcal{M}\) or a special symbol \(\bot\) to indicate rejection. We say that the scheme is \((1-\delta)\)-correct if $\E[\max_{m \in \mathcal{M}} \Pr[\mathsf{Dec}(sk, \mathsf{Enc}(pk,m))=m]] \geq 1 -\delta$, where the expectation is taken over $(pk, sk)$ and the probability is taken over the random coins of $\mathsf{Enc}$.
\end{definition}

\begin{definition}[IND-CPA and IND-CCA \cite{Kyber2018}]\label{def:cpacca}
We revisit the standard security notions for public-key encryption, specifically indistinguishability under chosen-ciphertext attacks (IND-CCA) and chosen-plaintext attacks (IND-CPA). The advantage of
an adversary $\mathsf{A}$ is defined as 
\begin{equation}
\mathsf{Adv}^{\mathsf{CCA}}_{\mathsf{PKE}}(\mathsf{A}) = \left |\Pr \left(b=b' : \begin{array}{c}
(\mathsf{pk}, \mathsf{sk}) \leftarrow \mathsf{KeyGen}();\\ (m_0,m_1, s) \leftarrow \mathsf{A}^{\mathsf{DEC}(\cdot)}(\mathsf{pk});\\
b \leftarrow \{0,1\}; c^{*} \leftarrow \mathsf{Enc}(\mathsf{pk}, m_b); \\
b' \leftarrow \mathsf{A}^{\mathsf{DEC}(\cdot)}(s, c^{*});
\end{array}\right) - 1/2 \right |
\end{equation}
where the decryption oracle is defined as $\mathsf{DEC}(\cdot) = \mathsf{Dec}(\mathsf{sk},\cdot)$. We also require that \( |m_0| = |m_1| \) and that in the second phase, the adversary \(\mathsf{A}\) is not permitted to query \(\mathsf{DEC}(\cdot)\) with the challenge ciphertext \(c^{*}\). The advantage \(\mathsf{Adv}^{\mathsf{CPA}}_{\mathsf{PKE}}(\mathsf{A})\) of an adversary \(\mathsf{A}\) is defined as \(\mathsf{Adv}^{\mathsf{CCA}}_{\mathsf{PKE}}(\mathsf{A})\), provided that \(\mathsf{A}\) cannot query $\mathsf{DEC}(\cdot)$.
\end{definition}

\begin{definition}[DFR and CER] Given a message $ m \in \mathcal{M}$, the Decryption Failure Rate (DFR) is denoted as $\delta \triangleq \Pr (\hat{m} \neq m)$, where $\hat{m}$ is the decryption of $c$ where $c= \mathsf{Enc}(pk,m)$. The communication cost refers to the ciphertext expansion rate (CER):
\begin{equation}
\rho =  \frac{\# \text{ of bits in } c}{\# \text{ of bits in } m}, \label{CER}
\end{equation}
i.e., the ratio of the ciphertext size to the plaintext size. 
\end{definition}

\vspace{-3mm}
\subsection{Kyber's IND-CPA-Secure Encryption}
Each message $m \in \{0,1\}^{n}$ can be
viewed as a polynomial in $R$ with coefficients in $\{0,1\}$. We recall Kyber.CPA = (KeyGen; Enc; Dec) \cite{Kyber2021} as described in Algorithms \ref{alg:kyber_keygen} to \ref{alg:kyber_dec}. The values of $\delta$, CER, and $(q, k, \eta_1,\eta_2, d_u, d_v)$ are given in Table \ref{Kyber_Par}. Note that the parameters $(q, k, \eta_1,\eta_2)$ determine the security level of Kyber, while the parameters $(d_u, d_v)$ describe the ciphertext compression rate.

\vspace{-5mm}
\setcounter{algorithm}{0}
\begin{algorithm}[H]
\caption{$\mathsf{Kyber.CPA.KeyGen()}$: key generation}
\label{alg:kyber_keygen}
\begin{algorithmic}[1]

    \State
    $\psi,\sigma\leftarrow\left\{ 0,1\right\} ^{256}$

    \State
    $\bA\sim R_{q}^{k\times k}\coloneqq\mathsf{Sam}(\psi)$

    \State
    $(\bs,\be)\sim\beta_{\eta_1}^{k}\times\beta_{\eta_1}^{k}\coloneqq\mathsf{Sam}(\sigma)$

    \State
    $\bt\coloneqq\boldsymbol{\mathrm{As+e}}$\label{line:t}

    \State \Return $\left(pk\coloneqq(\boldsymbol{\mathrm{t}},\psi),sk\coloneqq\bs\right)$  

\end{algorithmic}
\end{algorithm}

\vspace{-12mm}

\begin{algorithm}[H]
\caption{$\mathsf{Kyber.CPA.Enc}$ $(pk=(\boldsymbol{\mathrm{t}},\psi),m\in\{0,1\}^{n})$}
\label{alg:kyber_enc}
\begin{algorithmic}[1]

	\State
	$r \leftarrow \{0,1\}^{256}$

	\State
	$\boldsymbol{\mathrm{A}}\sim R_{q}^{k\times k}\coloneqq\mathsf{Sam}(\psi)$
	
	\State  $(\boldsymbol{\mathrm{r}},\boldsymbol{\mathrm{e}_{1}},e_{2})\sim\beta_{\eta_1}^{k}\times\beta_{\eta_2}^{k}\times\beta_{\eta_2}\coloneqq\mathsf{Sam}(r)$

    \State  $\boldsymbol{\mathrm{u}}\coloneqq Q_{\mathsf{Kyber}, 2^{d_{u}}}(\buone)$\label{line:u}

 \State  $v\coloneqq Q_{\mathsf{Kyber}, 2^{d_{v}}}(\boldsymbol{\mathrm{t}}^{T}\boldsymbol{\mathrm{r}}+e_2+\left\lceil {q}/{2}\right\rfloor \cdot m)$\label{line:v}

\State \Return $c\coloneqq(\mathsf{Index}_{2^{d_{u}}}(\boldsymbol{\mathrm{u}}),\mathsf{Index}_{2^{d_{v}}}(v))$

\end{algorithmic}
\end{algorithm}

\vspace{-12mm}

\begin{algorithm}[H]
\caption{${\mathsf{Kyber.CPA.Dec}}\ensuremath{(sk=\bs,c=(\bu,v))}$}
\label{alg:kyber_dec}
\begin{algorithmic}[1]

    \State
    $\bu\coloneqq \mathcal{C}_{2^{d_{u}}}(\mathsf{Index}_{2^{d_{u}}}(\boldsymbol{\mathrm{u}}))$

    \State
    $v\coloneqq \mathcal{C}_{2^{d_{v}}}(\mathsf{Index}_{2^{d_{v}}}(v))$

    \State \Return $\mathsf{Compress}_{q}(v-\bs^{T}\bu,1)$

\end{algorithmic}
\end{algorithm}

\vspace{-10mm}

\begin{table}[ht]
\caption{Parameters of Kyber \cite{NISTpqcfinal2024}}
\label{Kyber_Par}\centering
\vspace{-3mm}
\begin{tabular}{|c||c|c|c|c|c|c|c|c|c|}
\hline
& $k$ & $q$ & $\eta_{1}$ & $\eta_{2}$ & $d_{u}$ & $d_{v}$ & DFR & CER & Plaintext Size \\ \hline
KYBER512 &  $2$ & $3329$ & $3$ & $2$ & $10$ & $4$ & $2^{-138}$ & $24$ & $256$ bits\\ \hline
KYBER768 & $3$ & $3329$ & $2$ & $2$ & $10$ & $4$ & $2^{-164}$ &$34$ & $256$ bits\\ \hline
KYBER1024 & $4$ & $3329$ & $2$ & $2$ & $11$ & $5$ & $2^{-174}$ &$49$ & $256$ bits \\ \hline
\end{tabular}
\vspace{-3mm}
\end{table}

%\vspace{-8mm}
\subsection{Kyber with Optimal Quantization}
\vspace{-1mm}

The choice of quantization affects the distribution of \((c_v, \mathbf{c}_u)\) and thus the decoding noise. \cite{liu2024SCKyber} shows that Kyber’s quantizer \(Q_{\mathsf{Kyber}}\) is suboptimal, as it does not minimize the mean squared values of \((c_v, \mathbf{c}_u)\). Replacing it with the MMSE-optimal Lloyd-Max quantizer \(Q_{\mathsf{MMSE}}\) improves the DFR without affecting Kyber’s security, which is independent of the quantization method and noise level.

Table~\ref{Com_KC_MMSE} presents DFR bounds under different quantizers. Using code from \cite{KyberCode} and the Lloyd-Max noise distribution, we compute the DFR numerically. A noticeable gap emerges between our results and the CLT-based asymptotic bound in \cite{liu2024SCKyber}, indicating that CLT-based analyses may underestimate DFR. Still, both bounds confirm that improved quantization reduces DFR.

\begin{table}[ht]
\caption{DFR bounds: Kyber Compression $Q_{\mathsf{Kyber}}$ vs. Lloyd-Max $Q_{\mathsf{MMSE}}$}
\label{Com_KC_MMSE}\centering
\vspace{-3mm}
\small
\begin{tabular}{|c||c|c||c|}
\hline
Source & \cite{NISTpqcfinal2024} & \cite{liu2024SCKyber}  & This work \\ \hline
 Quantization & $Q_{\mathsf{Kyber}}$ &  $Q_{\mathsf{MMSE}}$  &  $Q_{\mathsf{MMSE}}$  \\ \hline
 Bound type & Numerical & CLT  &  Numerical \\ \hline
KYBER512    & $2^{-138}$ & $2^{-150}$  & $2^{-142}$ \\ \hline
KYBER768    & $2^{-164}$ & $2^{-177}$  & $2^{-169}$  \\ \hline
KYBER1024    & $2^{-174}$ & $2^{-196}$  & $2^{-190}$  \\ \hline
\end{tabular}
\vspace{-3mm}
\end{table}

\subsection{Coded Kyber and CLT Assumption}
Kyber decryption decoding problem can be expressed as \cite{Kyber2021}%
\begin{equation}
y=v-\mathbf{s}^{T}\mathbf{u}=\left\lceil q/2\right\rfloor \cdot m+n_{e} \text{,}  \label{decoding_mode}
\end{equation}
where $n_{e}$ is the decryption decoding noise
%\vspace{-3mm}
\begin{align}
n_{e}&=v-\mathbf{s}^{T}\mathbf{u} -\left\lceil q/2\right\rfloor
\cdot m  = \mathbf{e}^{T}\mathbf{r}+e_{2}+c_{v}-\mathbf{s}^{T}\left( \mathbf{e}%
_{1}+\mathbf{c}_{u}\right),  \label{Ne}
\end{align}
where $(c_{v},\mathbf{c}_{u})$ refers to the quantization noises produced by the quantization $Q_{\mathsf{Kyber}}$ in Algorithm \ref{alg:kyber_enc}. Due to $n_e$, Kyber decryption has a failure rate.

From the information theory perspective, (\ref{decoding_mode}) can be viewed as an \emph{uncoded} $2$-PAM \cite{Proakis}, which has been generalized to the coded cases  \cite{liu2023lattice}\cite{liu2024SCKyber}\cite{LWEchannel2022}:
\begin{equation}
y=\left\lceil q/p\right\rfloor \cdot \mathsf{ENC}(m)+n_{e} \text{,}  \label{decoding_mode_coded}
\end{equation}
where $p \in \mathbb{Z}$, $m \leftarrow \{0,1\}^K$, and $\mathsf{ENC}(m) :  \{0,1\}^K \rightarrow \mathbb{Z}_p^{n}$ represents an encoder. For example, \cite{liu2023lattice} uses a lattice encoder, \cite{liu2024SCKyber} uses a binary BCH encoder, and \cite{LWEchannel2022} uses a Q-ary BCH encoder. The advantage of coded Kyber is the reduced CER, since more information bits can be encrypted to a single ciphertext.

\vspace{3mm}

\noindent {\bf Independence/CLT Assumption in \cite{liu2023lattice}\cite{liu2024SCKyber}\cite{LWEchannel2022}:} To estimate the DFR of coded Kyber, existing schemes assume that the entries in $n_e$ are mutually independent. This assumption relies on the CLT, i.e., for a certain variance $\sigma_e^{2}$,
\begin{equation}
\mathbf{e}^{T}\mathbf{r}-\mathbf{s}^{T}\left( \mathbf{e}%
_{1}+\mathbf{c}_{u}\right) \rightarrow \mathcal{N}(0, \sigma_e^{2}\mathbf{I}_n), \text{ as } k \cdot n \rightarrow \infty \label{CLT_A}
\end{equation}
An open question is whether we can develop a coded Kyber scheme without relying on the CLT assumption on $n_e$. We will address this question in the remainder of the paper.

%which is inaccurate for a finite value of $k \cdot n$ (see Table \ref{Com_KC_MMSE}).

\vspace{-0mm}

\section{Uncoded P$_\ell$-Kyber: Kyber with Packed Ciphertexts} \label{Sec: uncoded}
%\vspace{-0mm}
In this section, we first present an $\ell$-layer Kyber following PVW approach~\cite{PVW2008}. We then turn this into an IND-CCA KEM and analyze its key and ciphertext sizes as well as its DFR and CER compared to original Kyber.
\subsection{IND-CPA-Secure Encryption}
We consider the idea of \emph{packed ciphertexts} in \cite{PVW2008}, where  a ciphertext $c$ encrypts a vector of $\ell$ plaintext ring elements $\mathbf{m}=[m_1,\ldots, m_\ell]^T \in R_2^{\ell}$, not just a single ring element $m \in R_2$. In details, the same matrix $\mathbf{A}$ and encryption randomness $\mathbf{r}$ in Algorithm \ref{alg:kyber_enc} can be securely reused to encrypt $\mathbf{m}$, by having $\ell$ secret-key vectors $\mathbf{S}=[\mathbf{s}_1,\ldots,\mathbf{s}_\ell]$. The key generation, encryption and decryption functions of Kyber with $\ell$-packed ciphertexts (P$_\ell$-Kyber PKE) is given below.

\setcounter{algorithm}{0}
\vspace{-3mm}
\begin{algorithm}[H]
\caption{$\mathsf{Kyber.Packed.CPA.KeyGen()}$: key generation}
\label{alg:kyberP_keygen}
\begin{algorithmic}[1]

    \State
    $\psi,\sigma\leftarrow\left\{ 0,1\right\} ^{256}$

    \State
    $\bA\sim R_{q}^{k\times k}\coloneqq\mathsf{Sam}(\psi)$

    \State
    $(\mathbf{S},\mathbf{E})\sim\beta_{\eta_1}^{k \times \ell}\times\beta_{\eta_1}^{k \times \ell}\coloneqq\mathsf{Sam}(\sigma)$

    \State
    $\mathbf{T} \coloneqq\boldsymbol{\mathrm{AS+E}}$

    \State \Return $\left(pk\coloneqq(\boldsymbol{\mathrm{T}},\psi),sk\coloneqq \mathbf{S} \right)$  

\end{algorithmic}
\end{algorithm}

\vspace{-12mm}

\begin{algorithm}[H]
\caption{$\mathsf{Kyber.Packed.CPA.Enc}$ $(pk=(\boldsymbol{\mathrm{T}},\psi),\mathbf{m} \in \{0,1\} ^{\ell \times n}$)}
\label{alg:kyberP_enc}
\begin{algorithmic}[1]

	\State
	$r \leftarrow \{0,1\}^{256}$

	\State
	$\boldsymbol{\mathrm{A}}\sim R_{q}^{k\times k}\coloneqq\mathsf{Sam}(\psi)$
	
	\State  $(\boldsymbol{\mathrm{r}},\boldsymbol{\mathrm{e}_{1}},\mathbf{e}_{2})\sim\beta_{\eta_1}^{k}\times\beta_{\eta_2}^{k}\times\beta_{\eta_2}^{\ell}\coloneqq\mathsf{Sam}(r)$
	
	\State  $\boldsymbol{\mathrm{u}}\coloneqq Q_{\mathsf{MMSE}, 2^{d_{u}}}(\buone)$
	
	\State  $\mathbf{v}\coloneqq Q_{\mathsf{MMSE}, 2^{d_{v}}}(\boldsymbol{\mathrm{T}}^{T}\boldsymbol{\mathrm{r}}+\mathbf{e}_2+\left\lceil {q}/{2}\right\rfloor \cdot \mathbf{m})$
	
	\State \Return $c\coloneqq(\mathsf{Index}_{2^{d_{u}}}(\boldsymbol{\mathrm{u}}),\mathsf{Index}_{2^{d_{v}}}(\mathbf{v}))$

\end{algorithmic}
\end{algorithm}

\vspace{-12mm}

\begin{algorithm}[H]
\caption{${\mathsf{Kyber.Packed.CPA.Dec}}\ensuremath{(sk=\mathbf{S},c=(\bu,\mathbf{v}))}$}
\label{alg:kyberP_dec}
\begin{algorithmic}[1]

    \State
    $\bu\coloneqq \mathcal{C}_{2^{d_{u}}}(\mathsf{Index}_{2^{d_{u}}}(\boldsymbol{\mathrm{u}}))$

    \State
    $\mathbf{v}\coloneqq \mathcal{C}_{2^{d_{v}}}(\mathsf{Index}_{2^{d_{u}}}(\boldsymbol{\mathrm{v}}))$

    \State \Return $\mathsf{Compress}_{q}(\mathbf{v}-\mathbf{S}^{T}\bu,1)$

\end{algorithmic}
\end{algorithm}
\vspace{-3mm}

\noindent {\bf Correctness.} Let $\delta_\ell$ be the DFR of P$_\ell$-Kyber PKE. We show below the correctness of the encryption scheme described in Algorithms \ref{alg:kyberP_keygen} to \ref{alg:kyberP_dec}. 

\begin{lemma}[Correctness of P$_\ell$-Kyber PKE]
The DFR is bounded by
\begin{equation}
\delta_{\ell}\leq \ell \cdot \delta,
\end{equation}
where $\delta$ is the DFR of the unpacked Kyber in Table \ref{Com_KC_MMSE}.
\end{lemma}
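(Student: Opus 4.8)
The plan is to show that decrypting each of the $\ell$ packed layers is distributionally equivalent to running one instance of (Lloyd--Max quantized) unpacked Kyber decryption, and then to control the overall failure probability by a union bound over the $\ell$ layers. So the proof has two ingredients: an algebraic identity that isolates the per-layer decoding noise, and a distributional observation that identifies its law with that of unpacked Kyber.

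First I would expand the quantity decrypted in Algorithm~\ref{alg:kyberP_dec}. Writing $\mathbf{u} = \mathbf{A}^T\mathbf{r} + \mathbf{e}_1 + \mathbf{c}_u$ and $\mathbf{v} = \mathbf{T}^T\mathbf{r} + \mathbf{e}_2 + \lceil q/2\rfloor\cdot\mathbf{m} + \mathbf{c}_v$, where $\mathbf{c}_u,\mathbf{c}_v$ are the quantization noises of $Q_{\mathsf{MMSE},2^{d_u}}$ and $Q_{\mathsf{MMSE},2^{d_v}}$, and substituting $\mathbf{T} = \mathbf{AS}+\mathbf{E}$, the $\mathbf{A}^T\mathbf{r}$ term cancels and one obtains, entrywise,
\begin{equation}
v_i - \mathbf{s}_i^T\mathbf{u} = \lceil q/2\rfloor\cdot m_i + n_{e,i}, \qquad n_{e,i} = \mathbf{e}_i^T\mathbf{r} + e_{2,i} + c_{v,i} - \mathbf{s}_i^T(\mathbf{e}_1 + \mathbf{c}_u),
\end{equation}
where $\mathbf{s}_i,\mathbf{e}_i$ are the $i$-th columns of $\mathbf{S},\mathbf{E}$ and $c_{v,i}$ the $i$-th entry of $\mathbf{c}_v$. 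Since $\mathsf{Compress}_q(\cdot,1)$ and the quantizers act entrywise, layer $i$ is recovered iff $\mathsf{Compress}_q(\lceil q/2\rfloor m_i + n_{e,i},1) = m_i$, which is exactly the success event of unpacked Kyber decryption of $(\mathbf{u},v_i)$ under key $\mathbf{s}_i$ in the sense of (\ref{decoding_mode})--(\ref{Ne}).

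The substance is then to observe that the marginal joint law of $(\mathbf{s}_i,\mathbf{e}_i,\mathbf{r},\mathbf{e}_1,e_{2,i})$ produced by Algorithms~\ref{alg:kyberP_keygen}--\ref{alg:kyberP_enc} is identical to that of $(\mathbf{s},\mathbf{e},\mathbf{r},\mathbf{e}_1,e_2)$ in Algorithms~\ref{alg:kyber_keygen}--\ref{alg:kyber_enc}: all are mutually independent central binomial samples with the same parameters $\eta_1,\eta_2$. Hence $\mathbf{c}_u$, $c_{v,i}$, and therefore $n_{e,i}$, share the exact marginal distribution of the corresponding quantities in (MMSE-quantized) unpacked Kyber, so $\Pr[\hat{m}_i\neq m_i]$ --- over the coins of $\mathsf{Enc}$ and averaged over keys --- equals the unpacked decryption failure probability, which is at most $\delta$ of Table~\ref{Com_KC_MMSE}. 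Because layer $i$'s failure event depends on $\mathbf{m}$ only through $m_i$, the maximum over messages passes through the sum, and a union bound yields
\begin{equation}
\delta_\ell \;\le\; \E\!\left[\sum_{i=1}^{\ell}\max_{m_i}\Pr[\hat{m}_i\neq m_i]\right] \;=\; \sum_{i=1}^{\ell}\E\!\left[\max_{m_i}\Pr[\hat{m}_i\neq m_i]\right] \;\le\; \ell\cdot\delta .
\end{equation}

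The one point requiring care is that the $\ell$ layers are \emph{not} independent: they reuse the same $\mathbf{A}$, $\mathbf{r}$, $\mathbf{e}_1$ (hence $\mathbf{c}_u$), exactly as in the PVW construction. This correlation is immaterial for a union bound, which invokes only the marginal failure probability of each layer; the genuine work is therefore the marginal-matching claim --- in particular, checking that reusing the ciphertext vector $\mathbf{u}$ across layers leaves each per-layer decoding-noise distribution unchanged --- and that, together with the observation that the per-layer failure depends on $\mathbf{m}$ only through $m_i$ so that $\max_{\mathbf{m}}$ distributes over the sum, is the step I would treat most carefully. The remainder is the elementary union bound displayed above.
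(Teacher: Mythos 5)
Your proposal is correct and follows essentially the same route as the paper: isolate the per-layer decoding noise $n_{e,i}=\mathbf{e}_i^{T}\mathbf{r}+e_{2,i}+c_{v,i}-\mathbf{s}_i^{T}(\mathbf{e}_1+\mathbf{c}_u)$, identify its marginal law with that of unpacked (MMSE-quantized) Kyber, and apply a union bound over the $\ell$ layers. The paper's proof is terser and leaves the marginal-matching step and the irrelevance of inter-layer dependence implicit, so your added care there is a faithful elaboration rather than a different argument.
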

\begin{proof}
Let $\mathbf{n}_{e} =[n_{0},\ldots,n_{\ell-1}]^T \in R_q^{\ell}$ be the decoding noise of P$_\ell$-Kyber. Similar to \eqref{Ne}, we can write $\mathbf{n}_{e}$ as%
\begin{align}
\mathbf{n}_{e}&=\mathbf{v}-\mathbf{S}^{T}\mathbf{u} -\left\lceil q/2\right\rfloor
\cdot \mathbf{m} = \mathbf{E}^{T}\mathbf{r}+\mathbf{e}_{2}+\mathbf{c}_{v}-\mathbf{S}^{T}\left( \mathbf{e}%
_{1}+\mathbf{c}_{u}\right),
\end{align}
where $(\mathbf{c}_{v},\mathbf{c}_{u})$ are the quantization noises. Using the union bound, we obtain
\begin{equation}
\delta_\ell = \Pr (\|\mathbf{n}_e\|_{\infty} \geq \lceil q/4 \rfloor) \leq \sum_{i=0}^{\ell-1}\Pr(\|n_{i} \|_{\infty}\geq \lceil q/4 \rfloor) = \ell \cdot \delta.
\end{equation}
\end{proof}

\begin{remark}
The DFR of P$_\ell$-Kyber increases with $\ell$ but can be reduced using \(Q_{\mathsf{MMSE}}\). Its key benefit is the low CER, denoted as $\rho_\ell$: 
\begin{equation}
    \rho_\ell = \dfrac{knd_u+\ell n d_v}{N} = \dfrac{kd_u}{\ell}+ d_v. \label{CER_P}
\end{equation}
where $N=n \cdot \ell$ is the plaintext size (in bits). Table \ref{sum_contribution} shows \((\delta=\delta_\ell, \rho=\rho_\ell)\) as a function of \(\ell\) (refer to the column labeled ``This Work – Uncoded''). P$_{8}$-KYBER1024 reduces CER by $79\%$ and DFR by \(2^{13}\), relative to KYBER1024.
\end{remark}

\noindent {\bf Security.} We will prove that the
encryption scheme defined above is IND-CPA secure under the M-LWE hardness assumption.

\begin{lemma}[IND-CPA Security of P$_\ell$-Kyber PKE]
For any adversary $\mathsf{A}$, there exists an adversary $\mathsf{B}$
such that $\mathsf{Adv}^{\mathsf{CPA}}_{\mathsf{P}_\ell-\mathsf{Kyber}}( \mathsf{A}) \leq (\ell+1) \cdot \mathsf{Adv}^{\mathsf{M-LWE}}_{k+\ell, k, \eta}(\mathsf{B})$.
\end{lemma}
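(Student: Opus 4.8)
The plan is to proceed by a hybrid argument that parallels the standard IND-CPA proof for Kyber, but adapted to the packed setting where the public key $\mathbf{T} = \mathbf{A}\mathbf{S}+\mathbf{E}$ contains $\ell$ columns and the encryption reuses a single randomness vector $\mathbf{r}$ across all $\ell$ layers. The key observation is that the ciphertext $(\mathbf{u}, \mathbf{v})$ with $\mathbf{u} = \mathbf{A}^T\mathbf{r}+\mathbf{e}_1$ and $\mathbf{v} = \mathbf{T}^T\mathbf{r}+\mathbf{e}_2+\lceil q/2\rceil\cdot\mathbf{m}$ can be written as a single M-LWE instance of dimension $k+\ell$ relative to the secret $\mathbf{r}$: stacking $\begin{bmatrix}\mathbf{A} & \mathbf{T}\end{bmatrix}^T$ as the coefficient matrix (with $k+\ell$ rows of ring elements, $k$ columns) and $\begin{bmatrix}\mathbf{e}_1 \\ \mathbf{e}_2\end{bmatrix}$ as the noise, the pair $(\mathbf{u},\mathbf{v}-\lceil q/2\rceil\mathbf{m})$ is exactly a sample from $A_{\mathbf{r},\beta}$ on the matrix $[\mathbf{A}\,|\,\mathbf{T}]^T$. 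This is why the bound involves $\mathsf{Adv}^{\mathsf{M-LWE}}_{k+\ell,k,\eta}$.

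I would set up the following sequence of games. In $\mathsf{Game}_0$ the adversary interacts with the real IND-CPA experiment. In $\mathsf{Game}_1$, I replace the honestly generated public key $\mathbf{T}=\mathbf{A}\mathbf{S}+\mathbf{E}$ by a uniformly random $\mathbf{T}\leftarrow R_q^{k\times\ell}$; distinguishing $\mathsf{Game}_0$ from $\mathsf{Game}_1$ reduces to M-LWE, and since $\mathbf{T}$ has $\ell$ columns each an independent M-LWE sample block of dimension $k$, a standard hybrid over the $\ell$ columns costs $\ell\cdot\mathsf{Adv}^{\mathsf{M-LWE}}_{k,k,\eta}(\mathsf{B})$ (absorbed into the $(\ell+1)$ factor, using that $k+\ell\geq k$ so the $(k+\ell)$-sample advantage dominates). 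In $\mathsf{Game}_2$, now that $[\mathbf{A}\,|\,\mathbf{T}]^T$ is a uniformly random matrix in $R_q^{(k+\ell)\times k}$, I replace the ciphertext components $(\mathbf{u}, \mathbf{T}^T\mathbf{r}+\mathbf{e}_2)$ by a uniformly random vector in $R_q^{k+\ell}$; this single step is one invocation of M-LWE with $k+\ell$ samples, $k$-dimensional secret $\mathbf{r}$, costing $\mathsf{Adv}^{\mathsf{M-LWE}}_{k+\ell,k,\eta}(\mathsf{B})$. In $\mathsf{Game}_2$ the quantity $\mathbf{v}$ is uniformly random and independent of $\mathbf{m}$ (the additive mask $\lceil q/2\rceil\mathbf{m}$ is absorbed), and $\mathbf{u}$ is uniform, so the challenge ciphertext carries no information about the bit $b$; the MMSE quantization applied to uniform inputs is deterministic post-processing and does not reintroduce any dependence on $b$. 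Hence the adversary's advantage in $\mathsf{Game}_2$ is $0$, and summing the hops gives the claimed bound $(\ell+1)\cdot\mathsf{Adv}^{\mathsf{M-LWE}}_{k+\ell,k,\eta}(\mathsf{B})$, after bounding each $\mathsf{Adv}^{\mathsf{M-LWE}}_{k,k,\eta}$ by $\mathsf{Adv}^{\mathsf{M-LWE}}_{k+\ell,k,\eta}$ (more samples cannot decrease the distinguishing advantage).

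The main obstacle I anticipate is the bookkeeping around the quantization maps $Q_{\mathsf{MMSE}}$ and the $\mathsf{Index}/\mathcal{C}$ encoding in Algorithm~\ref{alg:kyberP_enc}: I must argue that these are applied \emph{after} the M-LWE-masked values and are independent of the secret key and of $b$, so that replacing the pre-quantization values by uniform ones and then quantizing yields a distribution the reduction can sample without knowing any trapdoor. Concretely, the reduction $\mathsf{B}$, given its M-LWE challenge $(\mathbf{M}, \mathbf{z})$ with $\mathbf{M}\in R_q^{(k+\ell)\times k}$, parses the first $k$ rows of $\mathbf{M}$ as $\mathbf{A}$ and the last $\ell$ rows as $\mathbf{T}$ (setting $pk=(\mathbf{T},\psi)$ after programming $\mathsf{Sam}(\psi)$ appropriately — or, cleanly, treating $\mathbf{A}$ as part of the public parameters), then parses $\mathbf{z}$ as $(\mathbf{u},\mathbf{w})$, samples $b$, $\mathbf{e}_2$ fresh if needed, forms $\mathbf{v}=Q_{\mathsf{MMSE},2^{d_v}}(\mathbf{w}+\mathbf{e}_2+\lceil q/2\rceil\mathbf{m}_b)$ and $\mathbf{u}'=Q_{\mathsf{MMSE},2^{d_u}}(\mathbf{u})$, and outputs the ciphertext. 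One subtlety worth a sentence in the write-up: in the real scheme $\mathbf{e}_2$ is part of the randomness, whereas here it must either be folded into the M-LWE noise block or sampled by $\mathsf{B}$; folding it in is cleanest since the M-LWE noise distribution on the last $\ell$ coordinates is exactly $\beta_{\eta}^{\ell}$. Beyond that, the argument is entirely routine, and I would present it compactly as a table of three games with the advantage differences annotated.
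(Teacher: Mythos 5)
Your proposal is correct and follows essentially the same route as the paper's proof: the identical three-game hybrid ($G_0 \to G_1$ replacing the $\ell$ columns of $\mathbf{T}$ at cost $\ell\cdot\mathsf{Adv}^{\mathsf{M\text{-}LWE}}_{k,k,\eta} \leq \ell\cdot\mathsf{Adv}^{\mathsf{M\text{-}LWE}}_{k+\ell,k,\eta}$, then $G_1 \to G_2$ replacing $(\mathbf{u},\mathbf{v})$ via one $(k+\ell)$-sample M-LWE instance with secret $\mathbf{r}$, then observing the advantage in $G_2$ is zero). Your additional remarks on treating the quantization as post-processing and folding $\mathbf{e}_2$ into the M-LWE noise are sound details that the paper leaves implicit.
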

\begin{proof}
Let $\mathsf{A}$ be an adversary that is executed in the IND-CPA
security experiment which we call game $G_0$, i.e., $\mathsf{Adv}^{\mathsf{CPA}}_{\mathsf{P}_\ell-\mathsf{Kyber}} =|\Pr(b = b' \text{ in game } 
 G_0)- 1/2|$. 

In game $G_1$, the $\ell$ column vectors in the public key $\mathbf{T}$ are simultaneously substituted with $\ell$ uniform random vectors. It is possible to verify that there exists an adversary $\mathsf{B}$ with the same running time as that of $\mathsf{A}$ such that 
\begin{equation}~\label{eq:G0G1l}
|\Pr(b = b' \text{ in game } G_0)- \Pr(b = b' \text{ in game } G_{1})| \leq \ell \mathsf{Adv}^{\mathsf{M-LWE}}_{k, k, \eta}(\mathsf{B}) \leq \ell \mathsf{Adv}^{\mathsf{M-LWE}}_{k+\ell, k, \eta}(\mathsf{B}), 
\end{equation}
where the second inequality holds since the adversary $\mathsf{B}$ will have access to more samples, in particular from $k$ to $k+\ell$.

In game $G_2$, the vectors $\mathbf{u}$ and $\mathbf{v}$ used in the generation of the challenge ciphertext are simultaneously substituted with uniform random vectors. Again, there exists an adversary $\mathsf{B}$ with the same running time as that of $\mathsf{A}$ with
\begin{equation}~\label{eq:Advmlwe}
|\Pr(b = b' \text{ in game } G_{1})- \Pr(b = b' \text{ in game } G_2)| \leq \mathsf{Adv}^{\mathsf{M-LWE}}_{k+\ell, k, \eta}(\mathsf{B}). 
\end{equation}
Note that in game $G_2$, the value $\mathbf{v}$ from the challenge ciphertext is independent of bit $b$ and therefore $\Pr(b = b' \text{ in game } G_2) = 1/2$. Collecting the probabilities in \eqref{eq:G0G1l} and \eqref{eq:Advmlwe} yields the required bound.
\end{proof}

\subsection{The CCA-Secure KEM}
Let $G: \{0, 1\}^* \rightarrow \{0, 1\}^{(\ell+1) \times 256}$ and $H: \{0, 1\}^* \rightarrow \{0, 1\}^{\ell \times 256}$ be hash functions. Given $z \leftarrow \{0, 1\}^{\ell \times 256}$, along the same line as \cite{Kyber2018}, a KEM is obtained by
applying a KEM variant of the Fujisaki–Okamoto (FO) transform \cite{modFOT2017} to the P$_\ell$-Kyber encryption scheme. We make explicit the randomness ${r}$ in the Enc algorithm.

\vspace{-5mm}
\setcounter{algorithm}{0}
\begin{algorithm}[H]
\caption{ $\mathsf{Kyber.Packed.Encaps}(pk = (\boldsymbol{\mathrm{T}}, \psi)$)}
\label{alg:kyberP_encaps}
\begin{algorithmic}[1]

	\State
	$\mathbf{m} \leftarrow \{0,1\}^{256 \times \ell}$

       \State
       $(\hat{K}, {r}) \coloneqq G(H(pk), \mathbf{m})$

       \State
       $ (\mathbf{u}, \mathbf{v}) \coloneqq  \mathsf{Kyber.Packed.CPA.Enc}$ $(pk=(\boldsymbol{\mathrm{T}},\psi),\mathbf{m}; {r}$)

       \State
       $c \coloneqq (\mathbf{u}, \mathbf{v})$

       \State
       $K \coloneqq H(\hat{K}, H(c))$\footnotemark

       \State
       \Return $(c, K)$
	
\end{algorithmic}
\end{algorithm}
\footnotetext{$H(c)$ was used in \cite{Kyber2021}\cite{Kyber2018} to simplify the implementation with non-incremental hash APIs. We can use $c$ in place of $H(c)$, as referenced in \cite{modFOT2017}\cite{NISTpqcfinal2024}. Another difference between \cite{Kyber2018} and \cite{Kyber2021} \cite{NISTpqcfinal2024} is that a third hash function, Key Derivation Function (KDF), is used to compute $K$, i.e.,  $K \coloneqq \mathsf{KDF}(\hat{K}, H(c))$ in \cite{Kyber2021}. Since these small tweaks don't affect the security and DFR analysis of Kyber, we follow the original design in \cite{Kyber2018}.}

\vspace{-12mm}
\begin{algorithm}[H]
\caption{ $\mathsf{Kyber.Packed.Decaps}(sk= (\mathbf{S}, z, \mathbf{T}, \psi ),c=(\bu,\mathbf{v}))$}
\label{alg:kyberP_decaps}
\begin{algorithmic}[1]

	\State
	$\mathbf{m}' \coloneqq{\mathsf{Kyber.Packed.CPA.Dec}}{(\mathbf{S},(\bu,\mathbf{v}))}$

       \State
       $(\hat{K}',{r}') \coloneqq G(H(pk), \mathbf{m}')$

       \State
       $ (\mathbf{u}', \mathbf{v}') \coloneqq  \mathsf{Kyber.Packed.CPA.Enc}$ $(pk=(\boldsymbol{\mathrm{T}},\psi),\mathbf{m}'; {r}'$)

       \If{$(\mathbf{u}', \mathbf{v}') =(\mathbf{u}, \mathbf{v})$}

       \State 
       \Return $K \coloneqq H(\hat{K}', H(c))$
       \Else
       \State
       \Return $K \coloneqq H(z, H(c))$

       \EndIf
	
\end{algorithmic}
\end{algorithm}

\vspace{-5mm}

\noindent {\bf Correctness.}  If Kyber.Packed.CPA is $(1- \delta_\ell)$-correct and $G$ is a random oracle, then P$_\ell$-Kyber is $(1- \delta_\ell)$-correct \cite{modFOT2017}. 

\vspace{3mm}

\noindent {\bf Security.} We provide the concrete security bounds from \cite{Kyber2018}\cite{modFOT2017} which proves P$_\ell$-Kyber KEM’s CCA-security, when $G$ and $H$ are modelled as random oracles. 

\begin{lemma} [IND-CCA Secure KEM \protect{\cite[Theo. 3.2 and 3.4]{modFOT2017}}]\label{lem:cca1}
For any classical adversary \(\mathsf{A}\) that makes at most \(q_{RO}\) queries to the random oracles \(H\) and \(G\), as well as \(q_D\) queries to the decryption oracle, there exists an adversary \(\mathsf{B}\) such that
\begin{equation}
\mathsf{Adv}^{\mathsf{CCA}}_{\mathsf{P}_\ell-\mathsf{Kyber}}(\mathsf{A})  \leq 3 \mathsf{Adv}^{\mathsf{CPA}}_{\mathsf{P}_\ell-\mathsf{Kyber}}(\mathsf{B})  + q_{RO} \cdot \delta_\ell + \dfrac{3q_{RO}}{2^{256\times \ell}}. 
\end{equation}
\end{lemma}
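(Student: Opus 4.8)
The plan is to observe that the KEM of Algorithms~\ref{alg:kyberP_encaps}--\ref{alg:kyberP_decaps} is exactly the KEM-variant Fujisaki--Okamoto transform of~\cite{modFOT2017} applied to the IND-CPA PKE of Algorithms~\ref{alg:kyberP_keygen}--\ref{alg:kyberP_dec}, so the claimed inequality follows by chaining the two modular steps ($\mathsf{T}$ followed by $\mathsf{U}^{\not\perp}_m$) analysed there. There is no new cryptographic content: the work is to match our scheme to the generic construction and to check that the hypotheses of \cite[Thm.~3.2 and 3.4]{modFOT2017} hold with our parameters.

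First I would set $\mathsf{PKE}_1 := \mathsf{T}[\mathsf{Kyber.Packed.CPA}, G]$, the derandomized scheme that encrypts $\mathbf{m}$ by running $\mathsf{Kyber.Packed.CPA.Enc}(pk,\mathbf{m};{r})$ with the randomness ${r}$ extracted from $G(H(pk),\mathbf{m})$, and identify P$_\ell$-Kyber KEM with $\mathsf{U}^{\not\perp}_m[\mathsf{PKE}_1, H]$ (implicit rejection, session key a hash of the encrypted message together with $H(c)$), up to the cosmetic use of $H(c)$ in place of $c$ already noted in the footnote. The message space is $\mathcal{M}=\{0,1\}^{256\ell}$ and $\mathbf{m}$ is sampled uniformly in line~1 of Algorithm~\ref{alg:kyberP_encaps}, so $|\mathcal{M}|=2^{256\ell}$; the PKE is $(1-\delta_\ell)$-correct (correctness lemma above) and IND-CPA secure (IND-CPA lemma above), which are precisely the two inputs the generic theorems require.

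Next I would invoke \cite[Thm.~3.4]{modFOT2017} for the $\mathsf{U}^{\not\perp}_m$ step: any classical IND-CCA adversary $\mathsf{A}$ against the KEM issuing $q_H$ queries to $H$ and $q_D$ decapsulation queries is converted into an OW-PCVA adversary against $\mathsf{PKE}_1$, where the decapsulation oracle of Algorithm~\ref{alg:kyberP_decaps} is simulated by the plaintext-checking and ciphertext-validity oracles (so $q_D$ is absorbed), at the cost of an additive statistical term of order $q_H/|\mathcal{M}|$. Then I would invoke \cite[Thm.~3.2]{modFOT2017} for the $\mathsf{T}$ step: any OW-PCVA adversary against $\mathsf{PKE}_1$ issuing $q_G$ queries to $G$ is bounded by $q_G\cdot\delta_\ell + O(q_G/|\mathcal{M}|) + 3\,\mathsf{Adv}^{\mathsf{CPA}}_{\mathsf{P}_\ell\text{-}\mathsf{Kyber}}(\mathsf{B})$, the factor $3$ and the $q_G\cdot\delta_\ell$ correctness term being exactly the loss incurred by derandomization. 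Chaining the two bounds, using $q_G,q_H\le q_{RO}$ and collecting the $1/|\mathcal{M}|$-type contributions (each at most $q_{RO}/2^{256\ell}$), yields $3\,\mathsf{Adv}^{\mathsf{CPA}}_{\mathsf{P}_\ell\text{-}\mathsf{Kyber}}(\mathsf{B}) + q_{RO}\cdot\delta_\ell + 3q_{RO}/2^{256\ell}$.

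The only real obstacle, since everything else is a direct citation, is the query bookkeeping across the composition: verifying that the re-encryption test of Algorithm~\ref{alg:kyberP_decaps} is faithfully emulated by the plaintext-checking oracle, that implicit rejection makes $\gamma$-spreadness unnecessary for this ROM bound, and that the various per-query $1/|\mathcal{M}|$ terms from Thms.~3.2 and~3.4 indeed sum to at most $3q_{RO}/2^{256\ell}$. One should also be careful to feed the \emph{$\ell$-layer} correctness error $\delta_\ell$ (not the single-layer $\delta$) into Thm.~3.2, since that theorem multiplies the correctness error by $q_G$.
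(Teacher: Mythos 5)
Your proposal is correct and matches the paper's approach: the paper gives no proof at all for this lemma, simply importing the bound from \cite[Theo.~3.2 and 3.4]{modFOT2017} exactly as you do, with the same inputs ($(1-\delta_\ell)$-correctness, IND-CPA security, and $|\mathcal{M}|=2^{256\ell}$) and the same chaining of the $\mathsf{T}$ and $\mathsf{U}^{\not\perp}$ steps. Your accounting of the factor $3$, the $q_{RO}\cdot\delta_\ell$ term, and the collected $3q_{RO}/2^{256\ell}$ statistical terms is a faithful (and more detailed) unpacking of that citation.
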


\begin{lemma} [IND-CCA Secure KEM  \protect{\cite[Theo. 3]{Kyber2021}\cite[Theo. 4]{Kyber2018}}]\label{lem:cca2}
For any quantum adversary \(\mathsf{A}\) that makes at most \(q_{RO}\) queries to the quantum random oracles \(H\) and \(G\), as well as at most \(q_D\) (classical) queries to the decryption oracle, there exists a quantum adversary \(\mathsf{B}\) such that
\begin{equation}
\mathsf{Adv}^{\mathsf{CCA}}_{\mathsf{P}_\ell-\mathsf{Kyber}}(\mathsf{A})  \leq 8 q_{RO}^2 \cdot \delta_{\ell} + 4 q_{RO} \sqrt{(\ell+1) \cdot \mathsf{Adv}^{\mathsf{M-LWE}}_{k+\ell, k, \eta}(\mathsf{B})}.
\end{equation}
\end{lemma}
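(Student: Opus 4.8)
The plan is to obtain Lemma~\ref{lem:cca2} as a direct instantiation of the generic quantum-random-oracle-model (QROM) security statement for the modified Fujisaki--Okamoto transform with implicit rejection, exactly as it is invoked to establish the CCA-security of the original Kyber KEM in \cite[Theo.~3]{Kyber2021} and \cite[Theo.~4]{Kyber2018} (which in turn rest on \cite{modFOT2017}). There is no new mathematical content to develop here; the work is purely to check that P$_\ell$-Kyber meets the preconditions of those black-box theorems and then to substitute the bounds already proved in the preceding lemmas.

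First I would observe that Algorithms~\ref{alg:kyberP_encaps} and~\ref{alg:kyberP_decaps} are precisely the $\mathsf{FO}^{\not\perp}$ transform applied to the P$_\ell$-Kyber PKE of Algorithms~\ref{alg:kyberP_keygen}--\ref{alg:kyberP_dec}: the encryption is derandomized by fixing its coins to $r = G(H(pk),\mathbf{m})$ (so that the derandomized encryption $T[\mathsf{PKE},G]$ is a deterministic PKE), the session key is derived as $K = H(\hat K, H(c))$, and a re-encryption mismatch triggers implicit rejection returning $H(z,H(c))$ with the secret seed $z$. Thus $\mathsf{P}_\ell$-Kyber KEM $= \mathsf{FO}^{\not\perp}[\,$P$_\ell$-Kyber PKE$,\,G,\,H\,]$, and the generic theorem applies once its hypotheses are verified.

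Next I would verify the two hypotheses. (i) Correctness: the correctness paragraph preceding the KEM shows that Kyber.Packed.CPA is $(1-\delta_\ell)$-correct, hence so is its derandomization; this supplies the additive $8 q_{RO}^2\cdot\delta_\ell$ term, which is the standard QROM cost (via the $O(q^2\delta)$ generic-search/one-way-to-hiding bound) of controlling the event that a re-encryption check behaves differently than for a perfectly correct scheme. (ii) IND-CPA security: the preceding IND-CPA lemma gives, for every adversary $\mathsf{B}$, $\mathsf{Adv}^{\mathsf{CPA}}_{\mathsf{P}_\ell-\mathsf{Kyber}}(\mathsf{B}) \le (\ell+1)\,\mathsf{Adv}^{\mathsf{M-LWE}}_{k+\ell,k,\eta}(\mathsf{B})$, which feeds the square-root term $4 q_{RO}\sqrt{\mathsf{Adv}^{\mathsf{CPA}}}$ of the QROM bound (the $\sqrt{\cdot}$ loss being intrinsic to the known QROM reductions for this transform). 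I would also note that the message space $\{0,1\}^{256\times\ell}$ is large enough that the guessing term $q_{RO}/2^{256\ell}$ appearing in the classical bound of Lemma~\ref{lem:cca1} is negligible and is absorbed into the constants of the quantum statement; and that, for the implicit-rejection variant, no $\gamma$-spreadness (well-spread ciphertext) hypothesis is needed, so nothing further about P$_\ell$-Kyber must be checked.

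Assembling these ingredients, the cited theorems yield $\mathsf{Adv}^{\mathsf{CCA}}_{\mathsf{P}_\ell-\mathsf{Kyber}}(\mathsf{A}) \le 8 q_{RO}^2\,\delta_\ell + 4 q_{RO}\sqrt{\mathsf{Adv}^{\mathsf{CPA}}_{\mathsf{P}_\ell-\mathsf{Kyber}}(\mathsf{B})}$, and substituting the IND-CPA bound gives the claimed inequality. The main (and only) obstacle is the bookkeeping: confirming that the precise technical conditions of the transform theorems in \cite{Kyber2018,Kyber2021,modFOT2017} are met verbatim by P$_\ell$-Kyber — derandomization well-defined, correctness error exactly $\delta_\ell$, matching query-count accounting for $q_{RO}$ over both $G$ and $H$ and for $q_D$ — so that the black-box bounds transfer without modification, now with the $(\ell+1)$-factor IND-CPA loss plugged in.
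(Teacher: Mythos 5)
Your proposal is correct and matches the paper's (implicit) justification exactly: the paper states this lemma as a black-box instantiation of the QROM Fujisaki--Okamoto bound from the cited Kyber/FO works, with the $(1-\delta_\ell)$-correctness and the $(\ell+1)\cdot\mathsf{Adv}^{\mathsf{M\text{-}LWE}}_{k+\ell,k,\eta}$ IND-CPA bound of P$_\ell$-Kyber substituted in, which is precisely the bookkeeping you carry out. The paper gives no further proof, so your verification of the transform's preconditions is, if anything, more explicit than the original.
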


\subsection{Parameter Sets}

P$_\ell$-Kyber KEM adopts the same parameters \((q, k, \eta_1, \eta_2, d_u, d_v)\) as Kyber KEM \cite{NISTpqcfinal2024}, shown in Table~\ref{Kyber_Par}. Key and ciphertext sizes are summarized in Table~\ref{Parameter_sets}, alongside Kyber KEM parameters for encapsulating \(\ell\) messages. P$_\ell$-Kyber achieves smaller sizes, especially for large \(\ell\). Its computational cost is also lower, as \(\mathbf{u}\) (or \(\mathbf{u}'\)) is computed only once in Algorithms~\ref{alg:kyberP_encaps} and~\ref{alg:kyberP_decaps}.

\vspace{-0mm}
\begin{table}[th]
\centering
\caption{Sizes (in bytes) of keys and ciphertexts: P$_\ell$-Kyber KEM vs. Kyber KEM}
\label{Parameter_sets}\centering
\vspace{-3mm}
\begin{threeparttable}
\begin{tabular}{|c||c|c|c|c|}
\hline 
 & $m$  & $pk$ & $sk$ & $c$  \\ \hline
P$_\ell$-Kyber KEM&  $n\ell/8$  & $12kn\ell/8+32 \footnotemark[1]$ & $24kn\ell/8+32\ell+32\footnotemark[1]$  & $d_ukn/8+d_vn\ell/8
$  \\ \hline
Kyber KEM &  $n\ell/8$  & $12kn\ell/8+32\ell$ & $24kn\ell/8+64\ell$  & $d_ukn\ell/8+d_vn\ell/8
$  \\ \hline
\end{tabular}%
 \begin{tablenotes}
       \item [1] In P$_\ell$-Kyber, the random seed $\psi$ only need to be transmitted once.
     \end{tablenotes}
%\footnotetext[1]{In P$_\ell$-Kyber, the random seed $\psi$ only need once.}
\vspace{-3mm}
\end{threeparttable}
\end{table}

In summary, the P$_\ell$-Kyber KEM
is a natural extension of the original Kyber KEM \cite{Kyber2021}\cite{NISTpqcfinal2024}, supporting the encapsulation of $\ell \geq 1$ secrets and a MMSE quantization $Q_{\mathsf{MMSE}}$. With $\ell=1$ and a non-MMSE quantization $Q_{\mathsf{Kyber}}$, the P$_\ell$-Kyber KEM reduces to the original Kyber KEM. The advantages of P$_\ell$-Kyber KEM are twofold: its CER approaching a constant value, $d_v$, as $\ell$ increases, while its DFR can be evaluated numerically. Note that the DFR analysis of current CER-oriented approaches \cite{LWEchannel2022}\cite{liu2023lattice}\cite{liu2024SCKyber} relies on the CLT assumption.  However, the downside of P$_\ell$-Kyber is its DFR increases linearly with the value of $\ell$. Since a high DFR will impact the security bounds in Lemmas \ref{lem:cca1} and \ref{lem:cca2},  the value of $\ell$ is bounded, e.g., $\ell \leq 16$ in P$_\ell$-KYBER512. In the next section, we will show how to reduce the DFR of the P\(_\ell\)-Kyber KEM by employing lattice codes, which enables the packing of significantly more plaintexts than the uncoded version.

\section{Lattice-Coded P$_\ell$-Kyber}\label{sec: LC}
We now propose our lattice packing approaches to further reduce the DFR of multi-layer Kyber introduced in the previous section. The additional complexities of lattice encoding techniques are provided at the end of this section too.
\subsection{Lattice Vertical Encoding and Lattice Packing}
The decoding model of uncoded P$_\ell$-Kyber can be expressed as
\begin{eqnarray}
\setlength{\arraycolsep}{3pt}
{\mathbf{Y}} &=& \left\lceil q/2\right\rfloor \cdot \mathbf{m} + \mathbf{n}_{e}, \; \text{where} \label{n_e_p}
\end{eqnarray}
\begin{align}
\setlength{\arraycolsep}{3pt}
\mathbf{n}_{e} 
&= \mathbf{E}^{T}\mathbf{r}+\mathbf{e}_{2}+\mathbf{c}_{v}-\mathbf{S}^{T}\left( \mathbf{e}%
_{1}+\mathbf{c}_{u}\right).
\end{align}
Let $\mathbf{n}_{e} =[n_{0},\ldots,n_{\ell-1}]^T \in R_q^{\ell}$ be the decoding noise, where each ring element $n_{i}$ can be further interpreted as a row vector of integer coefficients $ n_{i,j}$, i.e.,
\begin{equation*}
n_{i}=[n_{i,0},n_{i,1},\ldots,n_{i,n-1}],~~~0\leq i\leq \ell-1.
\end{equation*}
It is more convenient to represent $\mathbf{n}_e$ in matrix form:
\begin{equation}
\phi(\mathbf{n}_e) = \begin{bNiceMatrix}[c, columns-width=1mm]
n_{0,0} & n_{0,1} & \cdots & n_{0,n-1}  \\
n_{1,0} & n_{1,1} & \cdots & n_{1,n-1}  \\
\vdots & \vdots  & \cdots & \vdots  \\
n_{\ell-1,0} & n_{\ell-1,1} & \cdots & n_{\ell-1,n-1}\end{bNiceMatrix}.   \label{ne_M}
\end{equation}
where $\phi$ is given in Definition \ref{def:ME}. We can also represent $\mathbf{m}$ in matrix form:
\begin{equation}
\phi(\mathbf{m}) = \begin{bNiceMatrix}[c, columns-width=1mm]
m_{0,0} & m_{0,1} & \cdots & m_{0,n-1}  \\
m_{1,0} & m_{1,1} & \cdots & m_{1,n-1}  \\
\vdots & \vdots  & \cdots & \vdots  \\
m_{\ell-1,0} & m_{\ell-1,1} & \cdots & m_{\ell-1,n-1}
\end{bNiceMatrix}, \label{m_M}
\end{equation}
where $\mathbf{m} =[m_{0},\ldots,m_{\ell-1}]^T \in R_q^{\ell}$ is the $\ell$ messages, and $m_i$ can be further interpreted as a row vector of coefficients $ m_{i,j} \in \mathbb{Z}_2$, i.e., $m_{i}=[m_{i,0},m_{i,1},\ldots,m_{i,n-1}]$.

By substituting \eqref{ne_M} and \eqref{m_M} to \eqref{n_e_p}, the decoding model $\mathbf{Y}$ can be conveniently expressed in matrix form $\phi(\mathbf{Y})$:
\vspace{-1mm}
\begin{eqnarray}
\setlength{\arraycolsep}{3pt}
&&\left\lceil \dfrac{q}{2}\right\rfloor \cdot \underbrace{\begin{bNiceMatrix}[c, columns-width=0mm,first-row]
 &\text{\scriptsize $\in \{0,1\}^\ell$} & &    \\
m_{0,0} & \tikzmarknode{M00}{m_{0,1}} & \cdots & m_{0,n-1}  \\
m_{1,0} & m_{1,1} & \cdots & m_{1,n-1}  \\
\vdots & \vdots  & \cdots & \vdots  \\
m_{\ell-1,0} &  \tikzmarknode{MLL}{m_{\ell-1,1}} & \cdots & m_{\ell-1,n-1}
\end{bNiceMatrix}}_{\phi(\mathbf{m})}+ \underbrace{\begin{bNiceMatrix}[c, columns-width=0mm,first-row]
 &\text{\scriptsize i.i.d. RVs} & &    \\
{n_{0,0}} & \tikzmarknode{N00}{n_{0,1}} & \cdots & n_{0,n-1}  \\
\tikzmarknode{N10}{n_{1,0}} & n_{1,1} & \cdots & \tikzmarknode{N1N}{n_{1,n-1}}  \\
\vdots & \vdots  & \cdots & \vdots  &\\
{n_{\ell-1,0}} & \tikzmarknode{NLL}{n_{\ell-1,1}} & \cdots & n_{\ell-1,n-1} 
\end{bNiceMatrix}}_{\phi(\mathbf{n}_e)}  \begin{NiceMatrix}[c]
 \\
\text{\scriptsize depend.}\\
\text{\scriptsize RVs}\\
\\
\\
\end{NiceMatrix} \notag \\ \label{kyberp_decoding_model}
\end{eqnarray}

\vspace{-0mm}

\begin{tikzpicture}[remember picture,overlay]
 \draw let \p1=($(NLL)-(N00)$),\n1={atan2(\y1,\x1)} in 
 node[rotate fit=\n1,fit=(N00) (NLL),draw,rounded corners,inner sep=2pt]{};
  \draw let \p1=($(N1N)-(N10)$),\n1={atan2(\y1,\x1)} in 
 node[rotate fit=\n1,fit=(N10) (N1N),draw,rounded corners,inner sep=2pt]{};
   \draw let \p1=($(MLL)-(M00)$),\n1={atan2(\y1,\x1)} in 
 node[rotate fit=\n1,fit=(M00) (MLL),draw,rounded corners,inner sep=2pt]{};
\end{tikzpicture}

\vspace{-5mm}

\begin{lemma}[Vertical Decoding Noise]
In \eqref{kyberp_decoding_model}, the entries within each column of $\mathbf{n}_e$ are independent and identically distributed (i.i.d.) random variables.
\end{lemma}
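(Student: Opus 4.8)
The plan is to reduce the matrix statement in \eqref{kyberp_decoding_model} to one scalar identity per layer, to split the randomness of Algorithms~\ref{alg:kyberP_keygen}--\ref{alg:kyberP_enc} into a part that is \emph{shared} by all layers and a part that is \emph{fresh} for each layer, and then to observe that, once the shared part is frozen, every entry of a fixed column is the image of an independent, identically distributed per-layer seed under one and the same deterministic map.

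First I would unfold the noise. By Definition~\ref{def:ME} the $i$-th row of $\phi(\mathbf{n}_e)$ is the coefficient vector of the $i$-th coordinate $n_i \in R_q$ of $\mathbf{n}_e$, and from $\mathbf{n}_e = \mathbf{E}^{T}\mathbf{r} + \mathbf{e}_2 + \mathbf{c}_v - \mathbf{S}^{T}(\mathbf{e}_1+\mathbf{c}_u)$ one reads off
\[
n_i \;=\; \mathbf{E}_i^{T}\mathbf{r} \;+\; e_{2,i} \;+\; c_{v,i} \;-\; \mathbf{s}_i^{T}\bigl(\mathbf{e}_1+\mathbf{c}_u\bigr),\qquad 0\le i\le \ell-1,
\]
where $\mathbf{s}_i,\mathbf{E}_i\in R_q^{k}$ are the $i$-th columns of $\mathbf{S},\mathbf{E}$ and $e_{2,i},c_{v,i}\in R_q$ are the $i$-th coordinates of $\mathbf{e}_2,\mathbf{c}_v$. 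Two observations about the shared data $\Theta:=(\mathbf{A},\mathbf{r},\mathbf{e}_1)$ then matter: (i) $\mathbf{u}=Q_{\mathsf{MMSE},2^{d_u}}(\mathbf{A}^{T}\mathbf{r}+\mathbf{e}_1)$, hence $\mathbf{c}_u=\mathbf{u}-(\mathbf{A}^{T}\mathbf{r}+\mathbf{e}_1)$, is a deterministic function of $\Theta$ alone; and (ii) since $Q_{\mathsf{MMSE},2^{d_v}}$ is the scalar Lloyd--Max quantizer of Definitions~\ref{def:quan} and~\ref{def:MMSE}, applied coordinate-wise to $\mathbf{T}^{T}\mathbf{r}+\mathbf{e}_2+\lceil q/2\rfloor\mathbf{m}$, the coordinate $c_{v,i}$ depends only on the single ring element $\mathbf{t}_i^{T}\mathbf{r}+e_{2,i}+\lceil q/2\rfloor m_i$, with $\mathbf{t}_i=\mathbf{A}\mathbf{s}_i+\mathbf{E}_i$. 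Together these give a single measurable map $f$ such that, conditioned on $(\Theta,\mathbf{m})$, $n_i=f(\Theta;\mathbf{s}_i,\mathbf{E}_i,e_{2,i};m_i)$ for every $i$.

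Next I would use the sampling steps: $(\mathbf{S},\mathbf{E})\leftarrow\beta_{\eta_1}^{k\times\ell}\times\beta_{\eta_1}^{k\times\ell}$ has i.i.d.\ entries, $\mathbf{e}_2\leftarrow\beta_{\eta_2}^{\ell}$ has i.i.d.\ entries, and both are drawn independently of each other and of $\Theta$; hence the per-layer seeds $\omega_i:=(\mathbf{s}_i,\mathbf{E}_i,e_{2,i})$, $0\le i\le\ell-1$, are i.i.d.\ and independent of $\Theta$. Fixing a column index $j$ and applying the deterministic $j$-th coordinate projection $\pi_j$, one gets $n_{i,j}=\pi_j\bigl(f(\Theta;\omega_i;m_i)\bigr)$. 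For frozen $\Theta$ and $\mathbf{m}$, the vector $(n_{0,j},\dots,n_{\ell-1,j})$ is thus obtained by applying layer-wise maps to the independent family $(\omega_i)_i$, so its entries are conditionally independent given $\Theta$; and if the layer messages agree --- or, more generally, once one checks via the symmetry of the $d_v$-bit cells around $0$ that the conditional law of $c_{v,i}$ given $(\Theta,\omega_i)$ does not depend on $m_i$ --- the maps coincide across $i$ and the entries are in addition identically distributed. This is the per-column, across-layer i.i.d.\ structure claimed, and the form in which the result is invoked in the DFR bound after averaging over $\Theta$.

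The step I expect to be the main obstacle is controlling the quantization term $c_{v,i}$: one must argue cleanly both that $Q_{\mathsf{MMSE},2^{d_v}}$ genuinely decomposes coordinate-wise (so that $\omega_i$ really is a sufficient per-layer seed) and that the message shift $\lceil q/2\rfloor m_i$ does not covertly re-couple the layers, i.e.\ that it leaves the law of the induced quantization error unchanged. A secondary point worth stating explicitly is that the independence here is conditional on the shared encryption randomness $(\mathbf{r},\mathbf{e}_1)$: if one marginalises over it the column entries stay exchangeable but cease to be independent --- for instance their conditional second moments all scale with $\norm{\mathbf{r}}^2+\norm{\mathbf{e}_1}^2$ --- so the statement is best read, and is naturally used, in its conditional form inside the DFR computation.
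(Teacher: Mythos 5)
Your proposal takes essentially the same route as the paper's proof: condition on the shared encryption randomness $(\mathbf{r},\mathbf{e}_1+\mathbf{c}_u)$, observe that each column entry $n_{i,0}$ is the image of the per-layer seed $(\mathbf{s}_i,\mathbf{E}_i,e_{2,i},c_{v,i})$ under one and the same deterministic map, and conclude i.i.d.-ness from the i.i.d.-ness of those seeds. You are in fact slightly more careful than the paper on one point: the paper simply lists $c_{v,i}$ among the ``mutually independent'' per-layer inputs, whereas you derive it as a deterministic function of the shared data and the layer seed (and you correctly flag both the role of the message shift $\lceil q/2\rfloor m_i$ and the fact that the independence is conditional on the shared randomness), which is the cleaner justification of the same argument.
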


\begin{proof}
Recalling that $\mathbf{n}_{e} 
= \mathbf{E}^{T}\mathbf{r}+\mathbf{e}_{2}+\mathbf{c}_{v}-\mathbf{S}^{T}\left( \mathbf{e}%
_{1}+\mathbf{c}_{u}\right).$
Without loss of generality, we study the distribution of the first column in $\phi(\mathbf{n}_e)$, i.e., $\Pr(n_{0,0},\ldots, n_{\ell-1,0})$. For $0\leq i \leq \ell-1$, we observe that $n_{i,0}$ is generated by the same realization of  $(\mathbf{r}, \mathbf{e}
_{1}+\mathbf{c}_{u})$, denoted as $(\mathbf{a},\mathbf{b})$. We can interpret $n_{i,0}$ as a deterministic function of random variables with fixed parameters $(\mathbf{a},\mathbf{b})$:
\begin{eqnarray}
 n_{i,0}=g_{(\mathbf{a},\mathbf{b})}(E_i, \mathbf{s}_i, e_{2,i}, c_{v,i}),   
\end{eqnarray}
where $E_i$ and $\mathbf{s}_i$ are the $i$-th columns in $\mathbf{E}$ and $\mathbf{S}$, receptively. And $e_{2,i}$ and $c_{v,i}$ represent the $i$-th elements in $\mathbf{e}_2$ and $\mathbf{c}_v$, respectively.  Since $(E_i, \mathbf{s}_i, e_{2,i}, c_{v,i})$ are mutually independent for $0 \leq i \leq \ell-1$, $\{n_{i,0}\}_{i=0}^{\ell-1}$ are i.i.d. random variables.  
\end{proof}

\begin{remark}
 The current encoding schemes for (R/M-)LWE can be viewed as \emph{Horizontal Encoding}(H-Enc) \cite{LWEchannel2022}\cite{liu2023lattice}\cite{FrodoCong2022}\cite{NewhopeECC2018}, where the rows of $\phi(\mathbf{m})$ are encoded. The major issue of H-Enc is that the elements in each row of $\phi(\mathbf{n}_e)$ are dependent. The DFR analysis has to assume that the noise coefficients in each row are mutually independent (CLT), which may result in an underestimated DFR. Given that each column of \(\phi(\mathbf{n}_e)\) consists of i.i.d. RVs, it is natural to encode \(\phi(\mathbf{m})\) column-wise, thereby circumventing reliance on the CLT assumption.
\end{remark}

\begin{definition}[Lattice-Based Vertical Encoding (LV-Enc)]
Given $\mathbf{m}\leftarrow \mathcal{M}_{p, \ell}^n$, the lattice encoder and decoder are defined by
\begin{eqnarray}
\mathsf{Enc_\Lambda}&:& \mathcal{M}_{p, \ell}^n \rightarrow (\mathcal{L}(\hat{\mathbf{B}})\cap \mathbb{Z}_p^{\ell})^n \notag \\
\mathsf{Dec_\Lambda}&:& (\mathcal{L}(\hat{\mathbf{B}})\cap \mathbb{Z}_p^{\ell})^n \rightarrow \mathcal{M}_{p, \ell}^n
\end{eqnarray}
where $\mathsf{Enc_\Lambda}(\phi(\mathbf{m})):=\phi(\hat{\mathbf{m}}) = \hat{\mathbf{B}} \phi(\mathbf{m}) \bmod p$ encodes the $n$ columns of $\phi(\mathbf{m})$ into $n$ lattice points $\phi(\hat{\mathbf{m}})$ in a column-wise manner, and $\mathsf{Dec_\Lambda}(\phi(\hat{\mathbf{m}})): = \phi({\mathbf{m}}) = \mathsf{CVP_{HS}}( \phi(\hat{\mathbf{m}}))$ takes the input of $\phi(\hat{\mathbf{m}})$ and returns $\phi(\mathbf{m})$. The notations of the lattice $\mathcal{L}(\hat{\mathbf{B}})$, the matrix $\hat{\mathbf{B}} \in \mathbb{Z}^{\ell \times \ell}$, the message space $\mathcal{M}_{p, \ell}$, the decoder $\mathsf{CVP_{HS}}(\cdot)$, and the hypercube shaping $\mathcal{L}(\hat{\mathbf{B}})\cap \mathbb{Z}_p^{\ell}$ are given in Section \ref{Sec:Lattice_def}.
\end{definition}

To gain more insight, the coded version of \eqref{kyberp_decoding_model} is described by $\phi(\mathbf{Y})=$
\vspace{-3mm}
\begin{eqnarray}
\setlength{\arraycolsep}{3pt}
&&\left\lceil \dfrac{q}{p}\right\rfloor \cdot \underbrace{\hat{\mathbf{B}} \cdot \begin{bNiceMatrix}[c, columns-width=0mm,first-row]
 &\text{\scriptsize $\in \mathcal{M}_{p, \ell}$} & &    \\
m_{0,0} & \tikzmarknode{M00}{m_{0,1}} & \cdots & m_{0,n-1}  \\
m_{1,0} & m_{1,1} & \cdots & m_{1,n-1}  \\
\vdots & \vdots  & \cdots & \vdots  \\
m_{\ell-1,0} &  \tikzmarknode{MLL}{m_{\ell-1,1}} & \cdots & m_{\ell-1,n-1}
\end{bNiceMatrix} \bmod p}_{\phi(\hat{\mathbf{m}})} + \underbrace{\begin{bNiceMatrix}[c, columns-width=0mm,first-row]
 &\text{\scriptsize i.i.d. RVs} & &    \\
{n_{0,0}} & \tikzmarknode{N00}{n_{0,1}} & \cdots & n_{0,n-1}  \\
\tikzmarknode{N10}{n_{1,0}} & n_{1,1} & \cdots & \tikzmarknode{N1N}{n_{1,n-1}}  \\
\vdots & \vdots  & \cdots & \vdots  &\\
{n_{\ell-1,0}} & \tikzmarknode{NLL}{n_{\ell-1,1}} & \cdots & n_{\ell-1,n-1} 
\end{bNiceMatrix}}_{\phi(\mathbf{n}_e)}  \notag \\ \label{kyberp_decoding_C_model}
\end{eqnarray}

\vspace{-0mm}

\begin{tikzpicture}[remember picture,overlay]
 \draw let \p1=($(NLL)-(N00)$),\n1={atan2(\y1,\x1)} in 
 node[rotate fit=\n1,fit=(N00) (NLL),draw,rounded corners,inner sep=2pt]{};
  \draw let \p1=($(MLL)-(M00)$),\n1={atan2(\y1,\x1)} in 
 node[rotate fit=\n1,fit=(M00) (MLL),draw,rounded corners,inner sep=2pt]{};
\end{tikzpicture}

\vspace{-3mm}

\noindent {\bf Lattice packing.} The distribution of the noise vectors in LV-Enc is bounded by a hypersphere with high probability (We will show this in Lemma \ref{lem:KyberP_DFR}). Since the addition in (\ref{kyberp_decoding_model}) is over the modulo $q$ domain, the LV-Enc problem in P$_\ell$-Kyber can be viewed as a \emph{lattice packing} problem: an arrangement of non-overlapping spheres within a hypercube $\mathbb{Z}_{q}^{\ell}$. The model in \eqref{kyberp_decoding_model} uses the integer lattice codes $\lfloor q/2\rceil\mathbb{Z}_{2}^{\ell}$ for packing purposes, which is far from optimal. Even for very small dimensions $\ell$,
there exists much denser lattice packings than cubic ones.

\begin{definition}[Coded P$_\ell$-Kyber PKE]
The encryption and decryption of the uncode P$_\ell$-Kyber PKE can be easily adapted for the coded version by implementing the following modifications.
\begin{itemize}
    \item Coded version of Algorithm \ref{alg:kyberP_enc}
    \begin{itemize}
        \item input message space: replace $ \{0,1\}^{\ell \times n} $ by $\mathcal{M}_{p, \ell}^n$
        \item Step 5: replace $ \mathbf{m} $ by $ \phi^{-1}(\mathsf{Enc_\Lambda}(\phi(\mathbf{m})))$
    \end{itemize}
     \item Coded version of Algorithm \ref{alg:kyberP_dec}
     \begin{itemize}
        \item Step 3: replace $\mathsf{Compress}_{q}(\mathbf{v}-\mathbf{S}^{T}\bu,1)$ by $ \mathsf{Dec_\Lambda}(\phi(\mathbf{v}-\mathbf{S}^{T}\bu))$
    \end{itemize}
\end{itemize}
\end{definition}

For the choice of $\mathcal{L}(\hat{\mathbf{B}})$, in this work, we consider E$8$ lattice with  $\ell=8$, Barnes–Wall lattice with $\ell=16$ (BW16),  and Leech lattice with $\ell=24$ (Leech24)\cite{BK:Conway93}. These lattices provides the best known sphere packing in their dimension $\ell$. Since the coefficients in Kyber are integers, we will scale the original generator matrix to an integer matrix and utilize the corresponding $\hat{\mathbf{B}}$.

\vspace{3mm}

\noindent {\bf Correctness.} Let $\lambda (p)$ be the length of a shortest non-zero vector in the lattice $\mathcal{L}(\lfloor
q/p\rceil\hat{\mathbf{B}})$. The correct decoding radius of HS-CVP decoder (i.e., packing radius of $\mathcal{L}(\lfloor
q/p\rceil\hat{\mathbf{B}}$) is $\lambda (p)/2$. We show below the correctness of coded P$_\ell$-Kyber PKE.

\begin{lemma}[DFR of Coded P$_\ell$-Kyber PKE]\label{lem:KyberP_DFR}
\begin{equation}
\delta_{\ell} \leq n\exp \left(-\theta \lambda (p)^2/4 + \ell\log(M_{n^2_{0,0}}(\theta)) \right),\label{KyberP_DFR_LE}
\end{equation}
where $M_{X}(\theta)$ is the moment generating function of $X$, defined in Section \ref{sec:nd}.
\end{lemma}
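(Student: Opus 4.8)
The plan is to bound the decryption failure rate by combining a union bound over the $n$ columns of $\phi(\mathbf{n}_e)$ with a Chernoff-type tail bound applied to the squared Euclidean norm of a single column. First I would observe that, by the correctness of the HS-CVP decoder (Definition~\ref{def:HS-CVP}), decoding of the $j$-th column of $\phi(\mathbf{Y})$ in the coded model \eqref{kyberp_decoding_C_model} succeeds whenever the corresponding noise column $\mathbf{n}^{(j)} = (n_{0,j}, \ldots, n_{\ell-1,j})^T$ lies strictly inside the open ball of radius $\lambda(p)/2$, i.e. the packing radius of $\mathcal{L}(\lfloor q/p\rceil \hat{\mathbf{B}})$. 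Hence $\delta_\ell = \Pr(\exists j : \|\mathbf{n}^{(j)}\| \geq \lambda(p)/2) \leq \sum_{j=0}^{n-1} \Pr(\|\mathbf{n}^{(j)}\|^2 \geq \lambda(p)^2/4) = n\cdot \Pr(\|\mathbf{n}^{(0)}\|^2 \geq \lambda(p)^2/4)$, where the last equality uses that all columns of $\phi(\mathbf{n}_e)$ are identically distributed.

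Next I would estimate the single-column tail $\Pr(\|\mathbf{n}^{(0)}\|^2 \geq \lambda(p)^2/4)$ via the exponential Markov inequality: for any $\theta > 0$,
\begin{equation}
\Pr\!\left(\|\mathbf{n}^{(0)}\|^2 \geq \lambda(p)^2/4\right) \leq \exp\!\left(-\theta\lambda(p)^2/4\right)\cdot \E\!\left[\exp\!\left(\theta \|\mathbf{n}^{(0)}\|^2\right)\right].
\end{equation}
Since $\|\mathbf{n}^{(0)}\|^2 = \sum_{i=0}^{\ell-1} n_{i,0}^2$, the key structural input is the Vertical Decoding Noise Lemma, which guarantees the entries $n_{0,0}, \ldots, n_{\ell-1,0}$ are i.i.d. Therefore the MGF of the sum factorizes: $\E[\exp(\theta\sum_i n_{i,0}^2)] = \prod_{i=0}^{\ell-1}\E[\exp(\theta n_{i,0}^2)] = (M_{n_{0,0}^2}(\theta))^\ell$. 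Substituting and rewriting $(M_{n_{0,0}^2}(\theta))^\ell = \exp(\ell \log M_{n_{0,0}^2}(\theta))$ yields exactly the claimed bound \eqref{KyberP_DFR_LE}.

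The main obstacle — and the reason the i.i.d.\ structure is doing real work here — is the factorization of the MGF of $\|\mathbf{n}^{(0)}\|^2$. For a \emph{horizontal} encoding this step would be illegitimate, since the entries within a row of $\phi(\mathbf{n}_e)$ are dependent and one would be forced to invoke the CLT; the whole point of switching to vertical encoding is that Lemma~[Vertical Decoding Noise] makes the product form rigorous. A secondary technical point to handle carefully is that $\lambda(p)$ must genuinely be the minimum distance of the \emph{scaled, reduced} lattice $\mathcal{L}(\lfloor q/p\rceil \hat{\mathbf{B}})$ after hypercube shaping, so that the packing-radius argument is valid over $\mathbb{Z}_q^\ell$ rather than over $\mathbb{R}^\ell$ — i.e. one should check that wrap-around modulo $q$ does not create closer lattice points than $\lambda(p)$; this is where the lattice-packing viewpoint of Section~\ref{sec: LC} (non-overlapping spheres inside $\mathbb{Z}_q^\ell$) is invoked. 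Finally, the bound holds for every $\theta>0$, so in the numerical evaluation one optimizes over $\theta$; I would note this but not carry out the optimization in the proof itself.
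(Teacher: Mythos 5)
Your proposal is correct and follows essentially the same route as the paper's own proof: a union bound over the $n$ columns, a Chernoff (exponential Markov) bound on $\sum_{i=0}^{\ell-1} n_{i,0}^2$ against the squared packing radius $\lambda(p)^2/4$, and factorization of the MGF into $M_{n_{0,0}^2}(\theta)^{\ell}$ via the i.i.d.\ structure from the Vertical Decoding Noise lemma. Your additional remarks on the modulo-$q$ wrap-around and on optimizing $\theta$ are sensible elaborations of points the paper treats only implicitly, but they do not change the argument.
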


\begin{proof}
We first study the DFR for the first lattice point (the first column in $\hat{\mathbf{m}}$).
\begin{equation}
\delta^{(1)}=\Pr\left (\sum_{i=0}^{\ell-1} n_{i,0}^2 \geq \lambda (p)^2/4\right)
\end{equation}
Since $n_{i,0}$, for $0 \leq i \leq \ell-1$ are i.i.d., we have 
\begin{equation}
M_{\sum_{i=0}^{\ell-1} n_{i,0}^2}(\theta) = M_{ n_{0,0}^2}(\theta)^{\ell}
\end{equation}
Using Chernoff bound and union bound, we obtain \eqref{KyberP_DFR_LE}.
\end{proof}

\begin{comment}
\begin{equation}
 \delta^{(1)}\leq  \exp \left(-\theta \lambda (p)^2/4 + \ell\log(M_{n^2_{0,0}}(\theta)) \right)
\end{equation}
Since the $n$ columns in $\mathbf{n}_e$ have the same distribution, we obtain \eqref{KyberP_DFR_LE} according to the union bound.
\end{comment}

\begin{remark}
We numerically search the optimal $\theta$ which satisfies
\begin{equation}
    \theta= \arg \min_{\theta'\in \mathbb{R}} \exp\left(-\theta'\lambda (p)^2/4 + \ell\log(M_{n^2_{0,0}}(\theta')) \right).
\end{equation}
The distribution of $n^2_{0,0}$ can be obtained from the Python code in \cite{KyberCode}. For demonstration purposes, we plot the distribution of $n^2_{0,0}$ for P$_\ell$-KYBER1024 in Fig. \ref{K1024_N_CDF_Plot}. For different choices of $\mathcal{L}(\hat{\mathbf{B}})$, the values of $\lambda(p)$ are listed in Table \ref{KyberP_LC}.
\end{remark}

\noindent {\bf Plaintext size and CER.} Let \( K_{p,\ell} = \log_2(|\mathcal{M}_{p,\ell}|) \) denote the information bit length per lattice codeword in \eqref{kyberp_decoding_C_model}. According to Definition \ref{def:HS}, the plaintext size of coded P$_\ell$-Kyber, $N$ (in bits), can be computed by 
\begin{equation}
N=n \cdot  K_{p,\ell} =n \cdot {\textstyle\sum\nolimits}_{i=1}^{\ell}\log_{2}(p/\pi_{i}), \label{b_pl}
\end{equation}%
where $\pi_{i}$ is given in \eqref{HC_shaping}, for $i=1,\ldots, \ell$. The CER of coded P$_\ell$-Kyber is
\begin{equation}
    \rho_\ell = \dfrac{knd_u+\ell n d_v}{N} = \dfrac{kd_u+\ell d_v}{K_{p,\ell}}.
\end{equation}

Table~\ref{KyberP_LC} lists the $(\delta_{\ell},\rho_\ell)$ values for various lattice encoders. In comparison to the values of $\rho_\ell$ for uncoded P$_\ell$-Kyber, we notice that the coded version has a smaller $\rho_\ell$. This can be explained by $K_{p,\ell} \geq \ell$, i.e., the uncoded P$_\ell$-Kyber embeds $\ell$ secret bits in each column of $\phi(\mathbf{m})$ in \eqref{kyberp_decoding_model}, while the coded version encodes $K_{p,\ell}$ secret bits in each column of $\phi(\hat{\mathbf{m}})$ in \eqref{kyberp_decoding_C_model}. Coded P$_{24}$-KYBER1024 reduces CER by $90\%$ and DFR by \(2^{107}\), relative to KYBER1024.

\vspace{3mm}

\noindent {\bf Security.} The security proofs of coded P$_\ell$-Kyber PKE and KEM are the same as the uncoded versions and thus omitted.

\begin{figure}[tbp]
\centering
\includegraphics[width=0.8\textwidth]{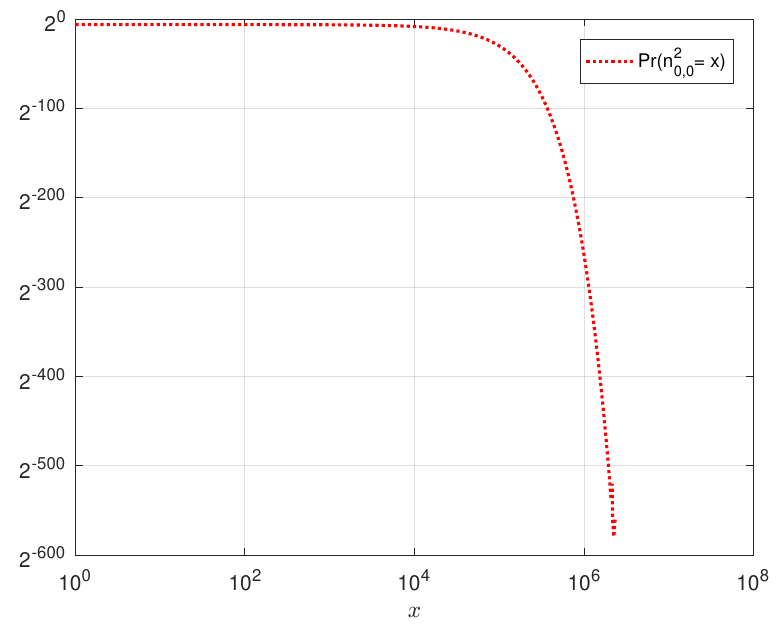} \vspace{-3mm} 
\caption{P$_\ell$-KYBER1024: distribution of $n_{0,0}^2$ from the Python code in \cite{KyberCode}}
\label{K1024_N_CDF_Plot}
\vspace{-3mm}
\end{figure}

\vspace{-0mm}
\begin{table}[tbh]
\centering
\caption{Lattice Codes for P$_\ell$-Kyber: using Kyber's $(k, q, \eta_1, \eta_2, d_u,d_v)$ }
\vspace{-3mm}
\label{KyberP_LC}\centering
\begin{tabular}{|c||c||c|c|c|}
\hline
& Uncoded & \multicolumn{3}{c|}{Coded} \\ \hline
Lattice  & $\mathbb{Z}^{8}$ & $\text{E8}$ & $\text{BW16}$ & $\text{Leech24}$ \\ \hline
$\ell$ & $8$ & $8$  & $16$ & $24$ \\ \hline
$p$ &  $2$  &  $4$ &  $4$ & $8$ \\ \hline
$\lambda(p)/(2\left\lfloor q/2\right\rceil )$  & $0.5$ & $0.7067$ & $0.7067$ & $0.7067$ \\ \hline
$K_{p,\ell} $ & $8$ & $8$ &$20$ & $36$ \\ \hline
$N$ (in bits) & \makecell{$8n$ \\ ($8$ AES keys)} & \makecell{$8n$ \\ ($8$ AES keys)} & \makecell{$20n$ \\ ($20$ AES keys)}  & \makecell{$36n$ \\ ($36$ AES keys)}\\ \hline
P$_\ell$-KYBER512 & \makecell{ $\delta_\ell=2^{-139}$  \\  $\rho_\ell=6.5$}  & \makecell{ $\delta_\ell=2^{-225}$ \\  $\theta=1.4\times 10^{-4}$ \\  $\rho_\ell=6.5$} & \makecell{ $\delta_\ell=2^{-184}$ \\  $\theta=1.3\times 10^{-4}$ \\  $\rho_\ell=4.2$} & \makecell{ $\delta_\ell=2^{-155}$ \\  $\theta=1.2\times 10^{-4}$ \\  $\rho_\ell=3.2$} \\ \hline
P$_\ell$-KYBER768 &  \makecell{ $\delta_\ell=2^{-166}$  \\  $\rho_\ell=7.8$} & \makecell{ $\delta_\ell=2^{-267}$ \\  $\theta=1.7\times 10^{-4}$ \\  $\rho_\ell=7.8$} &  \makecell{ $\delta_\ell=2^{-217}$ \\  $\theta=1.53\times 10^{-4}$ \\  $\rho_\ell=4.7$} & \makecell{ $\delta_\ell=2^{-183}$ \\  $\theta=1.4\times 10^{-4}$ \\  $\rho_\ell=3.5$} \\ \hline
P$_\ell$-KYBER1024 & \makecell{ $\delta_\ell=2^{-187}$  \\  $\rho_\ell=10.5$}  & \makecell{ $\delta_\ell=2^{-336}$ \\  $\theta=1.88\times 10^{-4}$ \\  $\rho_\ell=10.5$} & \makecell{ $\delta_\ell=2^{-306}$ \\  $\theta=1.85\times 10^{-4}$ \\  $\rho_\ell=6.2$}  & \makecell{ $\delta_\ell=2^{-281}$ \\  $\theta=1.79\times 10^{-4}$ \\  $\rho_\ell=4.6$} \\ \hline
\end{tabular}%
\vspace{-3mm}
\end{table}

\subsection{Side-Channel Attack, Constant-Time Decoder, and Complexity}
The implementation of the lattice decoder may be susceptible to side-channel attacks. In \cite{ECCTimingAttack2019}, The authors note that the decoding process typically recovers valid codewords more quickly than those containing errors. This timing information can be exploited to differentiate between valid ciphertexts and failing ciphertexts. However, this attack can be mitigated by employing a constant-time decoder. The fundamental idea is to partition the lattice $\mathsf{\Lambda}$  into the cosets of a specific sublattice $\mathsf{\Lambda}'$. The constant-time decoding problem for $\mathsf{\Lambda}$ can be reduced to the constant-time decoding problem for $\mathsf{\Lambda}'$. by exhaustively searching through all cosets of $\mathsf{\Lambda}'$.

In Table \ref{LD_TimeComplexity}, we recall the time complexity of existing constant-time lattice decoders in \cite{liu2023lattice}. We count the total numbers of additional-equivalent operations as in \cite{SloaneLeech86}. Let $\mathcal{L}(\mathbf{D}_{\ell})$ be the $\ell$-dimensional checkerboard lattice \cite{BK:Conway93}, and $\mathcal{L}(\mathbf{Q}_{24})$ be the Leech quarter lattice \cite{VD93Leech}. From an engineering perspective, the energy consumption associated with communication significantly exceeds that of computation. We believe that lattice codes can help reduce the overall energy consumption.

\begin{table}[th]
\centering
\caption{Constant-time lattice decoders: time complexity}
\label{LD_TimeComplexity}\centering
\vspace{-3mm}
\begin{tabular}{|c|c|c|c|c|c|}
\hline
Lattice decoder & $\mathbb{Z}$ \cite{Kyber2021}  & $\text{E8}$ \cite{SloaneDn1982} \cite{BK:Conway93} & $\text{BW16}$ \cite{BK:Conway93}\cite{FrodoCong2022}& $\text{Leech24}$ \cite{Conway1984}\cite{SloaneLeech86} & $\text{Leech24}$ \cite{constanttimeLeech2016}\\ \hline
Lattice dimension & $1$ & $8$ & $16$ & $24$ & $24$ \\ \hline
$\mathsf{\Lambda}'$& $\mathbb{Z}$ & $\mathcal{L}(\mathbf{D}_{8})$ & $\mathcal{L}(\mathbf{D}_{16})$ & $\mathcal{L}(\mathbf{D}_{24})$ & $\mathcal{L}(\mathbf{Q}_{24})$ \\ \hline
$\#$ of operations & $1$ & $64$ & $2048$ & $786432$ & $\approx 3974$ \\ \hline
\end{tabular}
\vspace{-3mm}
\end{table}

In summary, we demonstrate that lattice-based vertical encoding via ciphertext packing effectively reduces both DFR and CER of the orginal Kyber, by increasing the plaintext size $N$, without relying on the CLT assumption for decoding noise. 
\section{Truncated Lattice-Coded P$_\ell$-Kyber: $N=256$ bits} \label{sec: TLC}
In practical scenarios, the plaintext size is commonly fixed ($\ell=1$ or $\ell=2$), e.g., \( N = 256 \) bits as opposed to having large $\ell>2$ in Section 4. In this section, we will show that plaintext and ciphertext size of coded P\(_\ell\)-Kyber can be naturally adapted by truncation.

\subsection{Coded P$_\ell$-Kyber with Truncated Ciphertext}
Let us start by providing the definition of truncation. \begin{definition}[Truncation Function] Let \( \mathbf{A} \in \mathbb{F}^{\ell \times n} \) be a matrix over a field \( \mathbb{F} \), where \( A = [a_{i,j}] \) and each entry \( a_{i,j} \in \mathbb{F} \) for \( 0 \leq i \leq \ell-1 \), \( 0 \leq j \leq n-1 \). Let \( t \in \mathbb{N} \) be such that \( 1 \leq t \leq n \). The truncation function \( \operatorname{Trunc}_t \) is defined as:
\begin{equation*}
\mathsf{Trunc}_{t}(\mathbf{A} ) := \left[ a_{i,j} \right]_{0 \leq i \leq \ell-1,\, 0 \leq j \leq t-1}.    
\end{equation*}
That is, \( \mathsf{Trunc}_t( \mathbf{A}) \) returns a matrix consisting of the first \( t \) columns of \(  \mathbf{A} \), by removing its last \( n-t \) columns.
\end{definition}

The plaintext and ciphertext size of coded/uncoded P$_\ell$-Kyber can be easily adjusted according to a given plaintext size, e.g., $N=256$ bits. The basic idea is to truncate the ciphertext $\mathbf{v}= Q_{\mathsf{MMSE}, 2^{d_{v}}}(\boldsymbol{\mathrm{T}}^{T}\boldsymbol{\mathrm{r}}+\mathbf{e}_2+\left\lceil {q}/{2}\right\rfloor \cdot \mathsf{Enc_\Lambda}(\mathbf{m}))$. Let $\mathbf{v} =[v_{0},\ldots,v_{\ell-1}]^T \in R_q^{\ell}$, where each ring element $v_{i}$ can be further interpreted as a row vector of integer coefficients $ v_{i,j}$, i.e.,
\begin{equation*}
v_{i}=[v_{i,0},n_{i,1},\ldots,v_{i,n-1}],~~~0\leq i\leq \ell-1.
\end{equation*}
The vector \( \mathbf{v} \) can be equivalently expressed in matrix form:
\begin{equation}
\phi(\mathbf{v}) = \begin{bNiceMatrix}[c, columns-width=1mm]
v_{0,0}& v_{0,1} & \cdots & v_{0,n-1} \\
v_{1,0} & v_{1,1} & \cdots & v_{1,n-1} \\
\vdots & \vdots  & \cdots & \vdots  \\
v_{\ell-1,0}& v_{\ell-1,1} & \cdots & v_{\ell-1,n-1}\end{bNiceMatrix}.   \label{v_M}
\end{equation}

Due to P$_{\ell}$-Kyber’s column-wise decoding structure, i.e., $\mathsf{Dec_\Lambda}(\phi(\mathbf{v}-\mathbf{S}^{T}\bu))$, the last $n - t$ columns of $\phi(\mathbf{v})$ can be removed, resulting in a truncated vector $\hat{\mathbf{v}}$, whose matrix representation is given by:
\begin{equation}
  \phi(\hat{\mathbf{v}})= \mathsf{Trunc}_t(\phi(\mathbf{v}))=\begin{bNiceMatrix}[c, columns-width=1mm]
v_{0,0} & v_{0,1}  & \cdots & v_{0,t-1}  \\
v_{1,0} & v_{1,1} & \cdots & v_{1,t-1}  \\
\vdots   & \cdots & \vdots  \\
v_{\ell-1,0} & v_{\ell-1,1} & \cdots & v_{\ell-1,t-1}\end{bNiceMatrix}.
\end{equation}
The corresponding decoding model, i.e., $\phi({\mathbf{Y}})=\mathsf{Trunc}_t(\phi(\mathbf{v}-\mathbf{S}^{T}\bu))$, is given by 
\begin{eqnarray}
\setlength{\arraycolsep}{3pt}
&&\left\lceil \dfrac{q}{p}\right\rfloor \cdot \underbrace{\hat{\mathbf{B}} \cdot \begin{bNiceMatrix}[c, columns-width=0mm,first-row]
 &\text{\scriptsize $\in \mathcal{M}_{p, \ell}$} & &    \\
m_{0,0} & \tikzmarknode{M00}{m_{0,1}} & \cdots & m_{0,t-1}  \\
m_{1,0} & m_{1,1} & \cdots & m_{1,t-1}  \\
\vdots & \vdots  & \cdots & \vdots  \\
m_{\ell-1,0} &  \tikzmarknode{MLL}{m_{\ell-1,1}} & \cdots & m_{\ell-1,t-1}
\end{bNiceMatrix} \bmod p}_{\phi(\hat{\mathbf{m}})} + \underbrace{\begin{bNiceMatrix}[c, columns-width=0mm,first-row]
 &\text{\scriptsize i.i.d. RVs} & &    \\
{n_{0,0}} & \tikzmarknode{N00}{n_{0,1}} & \cdots & n_{0,t-1}  \\
\tikzmarknode{N10}{n_{1,0}} & n_{1,1} & \cdots & \tikzmarknode{N1N}{n_{1,t-1}}  \\
\vdots & \vdots  & \cdots & \vdots  &\\
{n_{\ell-1,0}} & \tikzmarknode{NLL}{n_{\ell-1,1}} & \cdots & n_{\ell-1,t-1} 
\end{bNiceMatrix}}_{\phi(\mathbf{n}_e)}  \notag \\ \label{kyberp_decoding_TC_model}
\end{eqnarray}

\vspace{-2mm}

\begin{tikzpicture}[remember picture,overlay]
 \draw let \p1=($(NLL)-(N00)$),\n1={atan2(\y1,\x1)} in 
 node[rotate fit=\n1,fit=(N00) (NLL),draw,rounded corners,inner sep=2pt]{};
  \draw let \p1=($(MLL)-(M00)$),\n1={atan2(\y1,\x1)} in 
 node[rotate fit=\n1,fit=(M00) (MLL),draw,rounded corners,inner sep=2pt]{};
\end{tikzpicture}

\noindent The plaintext size is reduced to $N=t K_{p,\ell}$, where $K_{p,\ell}$ is given in Table \ref{KyberP_LC}. The time complexity of truncation is $O(\ell(n-t))$.

\begin{definition}[P$_{t,\ell}$-Kyber PKE]
The encryption and decryption of the uncode P$_\ell$-Kyber PKE can be easily adapted for the truncated coded version, denoted by P$_{t,\ell}$-Kyber, by implementing the following modifications.
\begin{itemize}
    \item Truncated coded version of Algorithm \ref{alg:kyberP_enc}
    \begin{itemize}
        \item input message space: replace $ \{0,1\}^{256} $ by $\mathcal{M}_{p, \ell}^{t}$
        %\item Step 5: replace $ \mathbf{m} $ by $ \mathsf{Enc_\Lambda}(\mathbf{m})$
        \item Step 5: $\mathbf{v}\coloneqq Q_{\mathsf{MMSE}, 2^{d_{v}}}(\phi^{-1}(\mathsf{Trunc}_{t} (\phi(\boldsymbol{\mathrm{T}}^{T}\boldsymbol{\mathrm{r}}+\mathbf{e}_2))+\left\lceil {q}/{2}\right\rfloor \cdot \mathsf{Enc_\Lambda}(\phi(\mathbf{m}))))$
    \end{itemize}
     \item Truncated coded version of Algorithm \ref{alg:kyberP_dec}
     \begin{itemize}
        %\item Step 3: replace $\mathsf{Compress}_{q}(\mathbf{v}-\mathbf{S}^{T}\bu,1)$ by $ \mathsf{Dec_\Lambda}(\mathbf{v}-\mathbf{S}^{T}\bu)$
         \item Step 3:  replace $\mathsf{Compress}_{q}(\mathbf{v}-\mathbf{S}^{T}\bu,1)$ by $ \mathsf{Dec_\Lambda}(\phi(\mathbf{v})-\mathsf{Trunc}_{t}(\phi(\mathbf{S}^{T}\bu)))$
    \end{itemize}
\end{itemize}
\end{definition}

\vspace{3mm}

\noindent {\bf Correctness.} Using Lemma \ref{lem:KyberP_DFR} with $n=t$, the DFR of P$_{t,\ell}$-Kyber is given by
\begin{equation}
\delta_{t,\ell} \leq t\exp \left(-\theta \lambda (p)^2/4 + \ell\log(M_{n^2_{0,0}}(\theta)) \right),\label{KyberTP_DFR_LE}
\end{equation}
where $M_{X}(\theta)$ is the moment generating function of $X$, defined in Section \ref{sec:nd}.

\vspace{3mm}

\noindent {\bf Plaintext and ciphertext size.} For a fixed plaintext size \( N \) (in bits), e.g., \( N = 256 \), one can select the number of packed codewords as \( t  = N / K_{p,\ell}\), where \( K_{p,\ell} = \log_2(|\mathcal{M}_{p,\ell}|) \) is the information bit length per lattice codeword in Table \ref{KyberP_LC}. The resulting ciphertext size \( M \), corresponding to the pair \( (\mathbf{u}, \mathbf{v}) \), is given by:
\begin{equation}
    M = knd_u+ t K_{p, \ell} d_v= knd_u+Nd_v. 
\end{equation}
Given \( N = 256 \) bits and same $(k,d_u,d_v)$, the ciphertext size of P$_{t,\ell}$-Kyber is the same as that of standard Kyber. The CER of P$_{t,\ell}$-Kyber is given by
\begin{equation}
    \rho_{t,\ell}=\dfrac{M}{N}=\dfrac{knd_u+ tK_{p, \ell} d_v}{N}.
\end{equation}

\vspace{3mm}

\noindent {\bf Security.} The ciphertext of P\(_{t,\ell}\)-Kyber is derived by deterministically discarding \((n - t)\ell\) coefficients from the coded P\(_\ell\)-Kyber ciphertext, thereby preserving at least the same level of security as coded P\(_\ell\)-Kyber.

\subsection{CER Reduction Through Tighter Compression Parameters}

Since \( \delta_{t,\ell} \leq \delta_{\ell} \), and the values of \( \delta_{\ell} \) are significantly lower than that of standard Kyber (see Table \ref{KyberP_LC}), the CER can be reduced by selecting smaller compression parameters \( (d_u, d_v) \).

In Table \ref{TC_P_Kyber_Par}, we evaluate the performance of P\(_{t,\ell}\)-Kyber with parameters \( (N=32 \; \text{bytes}, t = 32, \ell = 8) \), employing the \text{E8} lattice encoder. The table reports DFR and CER values for various \( (d_u, d_v) \) configurations. Relative to the original KYBER1024, coded P\(_{t,\ell}\)-KYBER1024 achieves a $10.2\%$ reduction in CER and a DFR reduction by a factor of $2^{30}$, using \( (d_u = 10, d_v = 4) \). If a DFR of \( 2^{-128} \) is deemed sufficient, the CER can be further reduced by $16.3\%$ with \( (d_u = 9, d_v = 5) \).

\begin{table}[ht]
\caption{Parameters of P$_{t,\ell}$-Kyber with \( ( \ell = 8, t = 32) \)}
\label{TC_P_Kyber_Par}\centering
\vspace{-3mm}
\begin{tabular}{|c||c|c|c|c|c|c|c|c|c|c|}
\hline
 &$k$ & $q$ & $\eta_{1}$ & $\eta_{2}$ & $d_{u}$ & $d_{v}$ & DFR & CER & Plaintext Size & Ciphertext Size \\ \hline
KYBER1024 \cite{NISTpqcfinal2024}& $4$ & $3329$ & $2$ & $2$ & $11$ & $5$ & $2^{-174}$ &$49$ & $32$ bytes & $1568$ bytes\\ \hline
P$_{t,\ell}$-KYBER1024&  $4$ & $3329$ & $2$ & $2$ & $10$ & $4$ & $2^{-204}$ &$44$ & $32$ bytes & $1408$ bytes\\ \hline
P$_{t,\ell}$-KYBER1024& $4$ & $3329$ & $2$ & $2$ & $9$ & $6$ & $2^{-138}$ &$42$ & $32$ bytes & $1344$ bytes \\ \hline
P$_{t,\ell}$-KYBER1024& $4$ & $3329$ & $2$ & $2$ & $9$ & $5$ & $2^{-128}$ &$41$ & $32$ bytes & $1312$ bytes \\ \hline
\end{tabular}
\vspace{-3mm}
\end{table}

For completeness, Table \ref{sum_contribution} lists the $(\delta=\delta_{t,\ell}, \rho=\rho_{t,\ell})$ values for P$_{t,\ell}$-Kyber, as shown in the column titled ``This Work-Lattice'' and the rows labeled ``$1-2$ AES keys''. Specifically, for the case of one AES key, we consider parameters $(t = 32, \ell = 8, d_u = 10,  d_v= 4)$, and for two AES keys, $(t = 64, \ell = 8, d_u = 10,  d_v= 4)$. In both configurations, the \text{E8} lattice encoder is utilized. Notably, we observe that P$_{t,\ell}$-Kyber achieves the encryption of two AES keys using a ciphertext size of $1536$ bytes, whereas KYBER1024 requires $1568$ bytes to encapsulate a single AES key. This highlights the inefficiency of the original Kyber encoding and suggests significant room for optimization.

In summary, for a fixed plaintext size of $N = 256$ bits, the proposed P$_{t,\ell}$-Kyber scheme achieves lower CER and DFR compared to the original Kyber, at the cost of an increased public key size of $12kn\ell/8 + 32$ bytes, as detailed in Table \ref{Parameter_sets}. However, since many cryptographic protocols—including Kyber—allow the public key to be pre-stored and reused across multiple encapsulations, the communication overhead introduced by the larger public key becomes negligible as the number of encapsulations grows.

\section{Application to Multi-Recipient KEM}

We consider the Multi-Recipient Key Encapsulation Mechanism (mKEM) in \cite{MR_PKE2020}, which securely
sends the same session key $\mathbf{m}$ to a group of $L$ recipients. For definitions, syntaxes, and security models of mKEM and mPKE, please refer to Appendix. The construction of an IND-CPA secure mPKE is in most cases a simple modification of an IND-CPA secure PKE to the multi-recipient setting.

The mPKE scheme based on P$_{t, \ell}$-Kyber (P$_{t, \ell}$-mPKE) is detailed in Algorithms \ref{alg:kyberMR_keygen}–\ref{alg:kyberMR_dec}. A global public matrix $\mathbf{A} \sim R_q^{k \times k} \coloneqq \mathsf{Sam}(\psi)$ is sampled and made available to both the sender and all recipients. Each recipient $i$ executes Algorithm \ref{alg:kyberMR_keygen} to generate a key pair $(pk_i = \mathbf{T}_i, sk_i = \mathbf{S}_i)$, and forwards $pk_i$ to the sender. The sender aggregates the public keys $\{pk_i = \mathbf{T}_i\}_{i=0}^{L-1}$ and employs Algorithm \ref{alg:kyberMR_enc} to encrypt a session key $\mathbf{m}$, producing a ciphertext $c$, which is then distributed to all recipients. Upon receiving $c$, each recipient $i$ applies Algorithm \ref{alg:kyberMR_dec} with their secret key $sk_i = \mathbf{S}_i$ to recover the session key $\mathbf{m}$.

\setcounter{algorithm}{0}
\vspace{-3mm}
\begin{algorithm}[H]
\caption{$\mathsf{P}_{t, \ell}-\mathsf{mPKE.CPA.KeyGen()}$: key generation at Recipient $i$}
\label{alg:kyberMR_keygen}
\begin{algorithmic}[1]

    \State
    $\psi,\sigma\leftarrow\left\{ 0,1\right\} ^{256}$

    \State
    $\bA\sim R_{q}^{k\times k}\coloneqq\mathsf{Sam}(\psi)$

    \State
    $(\mathbf{S}_i,\mathbf{E})\sim\beta_{\eta_1}^{k \times \ell}\times\beta_{\eta_1}^{k \times \ell}\coloneqq\mathsf{Sam}(\sigma)$

    \State
    $\mathbf{T}_i \coloneqq\boldsymbol{\mathrm{A}\mathrm{S}_i+\mathrm{E}}$

    \State \Return $\left(pk_i\coloneqq(\boldsymbol{\mathrm{T}_i},\psi),sk_i\coloneqq \mathbf{S}_i \right)$  

\end{algorithmic}
\end{algorithm}

\vspace{-12mm}

\begin{algorithm}[H]
\caption{$\mathsf{P}_{t, \ell}-\mathsf{mPKE.CPA.Enc}$ $(pk=(\{\boldsymbol{\mathrm{T}}_i\}_{i=0}^{L-1},\psi),\mathbf{m} \in \{0,1\} ^{\ell \times t}$)}
\label{alg:kyberMR_enc}
\begin{algorithmic}[1]

	\State
	$r \leftarrow \{0,1\}^{256}$

	\State
	$\boldsymbol{\mathrm{A}}\sim R_{q}^{k\times k}\coloneqq\mathsf{Sam}(\psi)$
	
	\State  $(\boldsymbol{\mathrm{r}},\boldsymbol{\mathrm{e}_{1}},\mathbf{e}_{2})\sim\beta_{\eta_1}^{k}\times\beta_{\eta_2}^{k}\times\beta_{\eta_2}^{\ell}\coloneqq\mathsf{Sam}(r)$
	
	\State  $\boldsymbol{\mathrm{u}}\coloneqq Q_{\mathsf{MMSE}, 2^{d_{u}}}(\buone)$

    \For{$i \gets 0$ to $L-1$}
    
	\State $\mathbf{v}_i\coloneqq Q_{\mathsf{MMSE}, 2^{d_{v}}}(\phi^{-1}(\mathsf{Trunc}_{t} (\phi(\boldsymbol{\mathrm{T}_i}^{T}\boldsymbol{\mathrm{r}}+\mathbf{e}_2))+\left\lceil {q}/{2}\right\rfloor \cdot \mathsf{Enc_\Lambda}(\phi(\mathbf{m}))))$

    \EndFor
	
	\State \Return $c\coloneqq(\mathsf{Index}_{2^{d_{u}}}(\boldsymbol{\mathrm{u}}),\mathsf{Index}_{2^{d_{v}}}(\mathbf{v}_0),\ldots,\mathsf{Index}_{2^{d_{v}}}(\mathbf{v}_{L-1}))$

\end{algorithmic}
\end{algorithm}

\vspace{-12mm}

\begin{algorithm}[H]
\caption{${\mathsf{P}_{t, \ell}-\mathsf{mPKE.CPA.Dec}}\ensuremath{(sk_i=\mathbf{S}_i,c=(\bu,\mathbf{v}_i))}$}
\label{alg:kyberMR_dec}
\begin{algorithmic}[1]

    \State
    $\bu\coloneqq \mathcal{C}_{2^{d_{u}}}(\mathsf{Index}_{2^{d_{u}}}(\boldsymbol{\mathrm{u}}))$

    \State
    $\mathbf{v}_i\coloneqq \mathcal{C}_{2^{d_{v}}}(\mathsf{Index}_{2^{d_{u}}}(\boldsymbol{\mathrm{v}}_i))$

    \State \Return $ \mathsf{Dec_\Lambda}(\phi(\mathbf{v}_i)-\mathsf{Trunc}_{t}(\phi(\mathbf{S}_i^{T}\bu)))$

\end{algorithmic}
\end{algorithm}
\vspace{-3mm}

\vspace{3mm}

\noindent {\bf Correctness.} The DFR of P$_{t,\ell}$-mPKE is same as the P$_{t,\ell}$-Kyber in Table \ref{TC_P_Kyber_Par}.

\vspace{3mm}

\noindent {\bf Compact Ratio.} We recall the notation of compact ratio (CR) defined in \cite{MR_PKE2020}:
\begin{equation}
    \mu =\dfrac{L\cdot \text{size of } \mathbf{u}+L\cdot\text{size of } \mathbf{v}_i}{\text{size of } \mathbf{u}+L\cdot\text{size of } \mathbf{v}_i} \approx 1+\dfrac{\text{size of } \mathbf{u}}{\text{size of } \mathbf{v}_i}
\end{equation}
which measures asymptotically how much more compact mPKE is compared to $L$ instances of the original PKE, for a large $L$. Table \ref{M_P_Kyber_Par} presents the values of $(\delta,\mu,N,M)$ for the proposed P$_{t,\ell}$-mPKE scheme and those reported in \cite{MR_PKE2020}. It can be observed that P$_{t,\ell}$-mPKE achieves a higher $\mu$, implying improved communication efficiency compared to the scheme in \cite{MR_PKE2020}. For a large $L$, the ciphertext size of P$_{t,\ell}$-mPKE  is about $80\%$ of the scheme in \cite{MR_PKE2020}.

\begin{table}[ht]
\caption{Parameters of P$_{t,\ell}$-Kyber mPKE with \( (\ell = 8, t = 32) \) and $L$ Recipients}
\label{M_P_Kyber_Par}\centering
\vspace{-3mm}
\begin{tabular}{|c||c|c|c|c|c|c|c|c|c|}
\hline
 &$k$ & $q$ & $\eta_{1}$ & $\eta_{2}$ & $d_{u}$ & $d_{v}$ & DFR & CR &  Ciphertext Size\\ \hline
KYBER1024-mPKE \cite{MR_PKE2020}& $4$ & $3329$ & $2$ & $2$ & $11$ & $5$ & $2^{-174}$ &$9.8$  & $1408+160L$ bytes\\ \hline
P$_{t,\ell}$-KYBER1024-mPKE&  $4$ & $3329$ & $2$ & $2$ & $10$ & $4$ & $2^{-204}$ &$11$ &  $1280+128L$ bytes\\ \hline
\end{tabular}
\vspace{-3mm}
\end{table}

\vspace{3mm}

\noindent {\bf Security.} We will prove that the
encryption scheme defined above is IND-CPA secure under the M-LWE hardness assumption.

\begin{definition}[IND-CPA and IND-CCA of mPKE \cite{MR_PKE2020}]\label{def:cpacca_mKEM}
We revisit the security notions for mPKE encryption, specifically indistinguishability under chosen-ciphertext attacks (IND-CCA) and chosen-plaintext attacks (IND-CPA). The advantage of
an adversary $\mathsf{A}$ is defined as 
\begin{equation}
\mathsf{Adv}^{\mathsf{CCA}}_{\mathsf{mPKE},L}(\mathsf{A}) = \left |\Pr \left(b=b' : \begin{array}{c}
\{\mathsf{pk}_i, \mathsf{sk}_i\}_{i \in [L]} \leftarrow \mathsf{KeyGen}();\\ (m_0,m_1, s) \leftarrow \mathsf{A}^{\mathsf{DEC}(\cdot)}(\{\mathsf{pk}_i\}_{i \in [L]});\\
b \leftarrow \{0,1\}; c^{*} \leftarrow \mathsf{Enc}(\{\mathsf{pk}_i\}_{i \in [L]}, m_b); \\
b' \leftarrow \mathsf{A}^{\mathsf{DEC}(\cdot)}(s, c^{*});
\end{array}\right) - 1/2 \right |
\end{equation}
where the decryption oracle is defined as $\mathsf{DEC}(\cdot) = \mathsf{Dec}(\mathsf{sk},\cdot)$. We also require that \( |m_0| = |m_1| \) and that in the second phase, the adversary \(\mathsf{A}\) is not permitted to query \(\mathsf{DEC}(\cdot)\) with the challenge ciphertext \(c^{*}\). The advantage \(\mathsf{Adv}^{\mathsf{CPA}}_{\mathsf{mPKE},L}(\mathsf{A})\) of an adversary \(\mathsf{A}\) is defined as \(\mathsf{Adv}^{\mathsf{CCA}}_{\mathsf{mPKE},L}(\mathsf{A})\), provided that \(\mathsf{A}\) cannot query $\mathsf{DEC}(\cdot)$.
\end{definition}

\begin{lemma}[IND-CPA Security of P$_{t,\ell}$-mPKE]
For any adversary $\mathsf{A}$, there exists an adversary $\mathsf{B}$
such that $\mathsf{Adv}^{\mathsf{CPA}}_{\mathsf{P}_{t,\ell}-\mathsf{mPKE},L}( \mathsf{A}) \leq L(\ell+1) \cdot \mathsf{Adv}^{\mathsf{M-LWE}}_{k+\ell, k, \eta}(\mathsf{B})$.
\end{lemma}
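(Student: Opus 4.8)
The plan is to mirror the game-hopping argument behind the IND-CPA security of the single-recipient $\mathsf{P}_\ell$-Kyber PKE, wrapping it in one extra layer of hybrids indexed by the $L$ recipients. I would start from game $G_0$, the real IND-CPA experiment of $\mathsf{P}_{t,\ell}$-mPKE, so that $\mathsf{Adv}^{\mathsf{CPA}}_{\mathsf{P}_{t,\ell}-\mathsf{mPKE},L}(\mathsf{A}) = |\Pr(b=b' \text{ in } G_0) - 1/2|$. As a preliminary simplification, note that $Q_{\mathsf{MMSE}}$, $\mathsf{Trunc}_t$, $\mathsf{Enc}_\Lambda$, $\phi$ and $\phi^{-1}$ are deterministic public maps applied on top of M-LWE-masked quantities, so, exactly as argued for $\mathsf{P}_{t,\ell}$-Kyber, they cannot help the adversary; it therefore suffices to show that the ``raw'' challenge $(\mathbf{u},\mathbf{v}_0,\dots,\mathbf{v}_{L-1})$, with $\mathbf{u}=\mathbf{A}^{\top}\mathbf{r}+\mathbf{e}_1$ and $\mathbf{v}_i=\mathbf{T}_i^{\top}\mathbf{r}+\mathbf{e}_2$ plus the public encoded message, is indistinguishable from uniform.

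\textbf{Phase 1 (randomizing the public keys).} I would interpolate $G_0=G_{1,0},G_{1,1},\dots,G_{1,L}=G_1$, where $G_{1,j}$ replaces $\mathbf{T}_0,\dots,\mathbf{T}_{j-1}$ by uniform matrices in $R_q^{k\times\ell}$ and keeps $\mathbf{T}_j,\dots,\mathbf{T}_{L-1}$ honest. Because $\mathbf{A}$ is a shared parameter and each $\mathbf{T}_i$ has $\ell$ columns, the gap $G_{1,j}\to G_{1,j+1}$ is bounded by $\ell\cdot\mathsf{Adv}^{\mathsf{M-LWE}}_{k,k,\eta}(\mathsf{B})$, exactly as in the single-recipient argument, hence by $\ell\cdot\mathsf{Adv}^{\mathsf{M-LWE}}_{k+\ell,k,\eta}(\mathsf{B})$ since extra samples only help $\mathsf{B}$. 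Telescoping gives $|\Pr(b=b'\text{ in }G_0)-\Pr(b=b'\text{ in }G_1)|\le L\ell\cdot\mathsf{Adv}^{\mathsf{M-LWE}}_{k+\ell,k,\eta}(\mathsf{B})$.

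\textbf{Phase 2 (randomizing the challenge ciphertext).} For this step I would use that in $G_1$ the $(k+L\ell)\times k$ matrix obtained by stacking $\mathbf{A}^{\top}$ on top of $\mathbf{T}_0^{\top},\dots,\mathbf{T}_{L-1}^{\top}$ is uniform, so $(\mathbf{u},\mathbf{v}_0,\dots,\mathbf{v}_{L-1})$ is (up to the public encoded message) a single M-LWE sample with secret $\mathbf{r}$ and fresh noise; replacing it by uniform yields $G_2$ via one M-LWE distinguishing step, equivalently $L$ steps each adjoining the $\ell$ coordinates of one $\mathbf{v}_i$ to the common $\mathbf{u}$-block, for a bound of the form $L\cdot\mathsf{Adv}^{\mathsf{M-LWE}}_{k+\ell,k,\eta}(\mathsf{B})$. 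In $G_2$ the challenge is independent of $b$, so $\Pr(b=b'\text{ in }G_2)=1/2$; summing the two phases gives $\mathsf{Adv}^{\mathsf{CPA}}_{\mathsf{P}_{t,\ell}-\mathsf{mPKE},L}(\mathsf{A})\le L(\ell+1)\cdot\mathsf{Adv}^{\mathsf{M-LWE}}_{k+\ell,k,\eta}(\mathsf{B})$.

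I expect the only genuinely delicate point to be Phase 2, specifically the coupling induced by the shared encryption randomness $\mathbf{r}$ (and hence the shared $\mathbf{u}$): one cannot re-randomize a single $\mathbf{v}_j$ in isolation while the remaining public keys are still honest, since a reduction attacking $\mathbf{v}_j$ would still need $\mathbf{r}$ to produce the honest $\mathbf{v}_{j'}$ for $j'\neq j$. Running Phase 1 first is exactly what removes this obstruction -- once every $\mathbf{T}_i$ is uniform, the whole tuple $(\mathbf{u},\mathbf{v}_0,\dots,\mathbf{v}_{L-1})$ lives under a single uniform matrix and can be treated jointly (or split across recipients while always reusing the $\mathbf{u}$-block) -- and it is also the place where one must be careful about exactly how many M-LWE samples the reduction consumes. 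The remaining checks are routine: the message encoding and the truncation are applied identically in every game so they cancel in the reductions, and the noise masking distinct $\mathbf{v}_i$'s must be independent so that each Phase-2 sub-step is a legitimate M-LWE instance. The CPA security of the induced KEM, and of the coded/uncoded and non-truncated variants, then follows verbatim from the single-recipient analysis.
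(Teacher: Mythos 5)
Your proposal is correct and follows essentially the same two-game argument as the paper: first replace all $L$ public keys $\mathbf{T}_i$ by uniform matrices at cost $L\ell\cdot\mathsf{Adv}^{\mathsf{M\text{-}LWE}}_{k+\ell,k,\eta}(\mathsf{B})$, then replace the challenge tuple $(\mathbf{u},\mathbf{v}_0,\dots,\mathbf{v}_{L-1})$ by uniform at cost $L\cdot\mathsf{Adv}^{\mathsf{M\text{-}LWE}}_{k+\ell,k,\eta}(\mathsf{B})$, after which the challenge is independent of $b$. Your explicit per-recipient hybrids and your remark that the shared randomness $\mathbf{r}$ forces the public keys to be randomized before the ciphertext components only make explicit what the paper leaves implicit, so no substantive difference remains.
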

\begin{proof}
Let $\mathsf{A}$ be an adversary that is executed in the IND-CPA
security experiment which we call game $G_0$, i.e., $\mathsf{Adv}^{\mathsf{CPA}}_{\mathsf{P}_{t,\ell}-\mathsf{mPKE},L} =|\Pr(b = b' \text{ in game } 
 G_0)- 1/2|$. 

In game $G_1$, the $\ell$ column vectors in each public key $\mathbf{T}_i$ are simultaneously substituted with $\ell$ uniform random vectors. It is possible to verify that there exists an adversary $\mathsf{B}$ with the same running time as that of $\mathsf{A}$ such that 
\begin{equation}~\label{eq:G0G1l_m}
|\Pr(b = b' \text{ in game } G_0)- \Pr(b = b' \text{ in game } G_{1})| \leq L\ell \cdot \mathsf{Adv}^{\mathsf{M-LWE}}_{k, k, \eta}(\mathsf{B}). 
\end{equation}

In game $G_2$, the vectors $\mathbf{u}$ and $\mathbf{v}_i$ used in the generation of the challenge ciphertext are simultaneously substituted with uniform random vectors. Again, there exists an adversary $\mathsf{B}$ with the same running time as that of $\mathsf{A}$ with
\begin{equation}~\label{eq:Advmlwe_m}
|\Pr(b = b' \text{ in game } G_{1})- \Pr(b = b' \text{ in game } G_2)| \leq L \cdot \mathsf{Adv}^{\mathsf{M-LWE}}_{k+\ell, k, \eta}(\mathsf{B}). 
\end{equation}
Note that in game $G_2$, the value $\mathbf{v}_i$ from the challenge ciphertext is independent of bit $b$ and therefore $\Pr(b = b' \text{ in game } G_2) = 1/2$. Collecting the probabilities in \eqref{eq:G0G1l_m} and \eqref{eq:Advmlwe_m} yields the required bound.
\end{proof}

\vspace{3mm}

\noindent {\bf P$_{t,\ell}$-mKEM.} An IND-CPA secure P$_{t,\ell}$-mPKE can be converted into an IND-CCA secure P$_{t,\ell}$-mKEM via the (generalized) FO transform described in \cite{MR_PKE2020}; therefore, the transformation details are omitted.

\section{Conclusion}
\vspace{-1mm}
In this paper, we have investigated the effects of ciphertext packing on M-LWE based KEMs like Kyber. We have also demonstrated that by utilizing packed ciphertexts, the CER of Kyber can be decreased by over $90\%$, while still maintaining IND-CCA secure. However, a general challenge with ciphertext packing is that the DFR increases linearly with the number of packed ciphertexts. To address this issue, we introduced a coded version of packed Kyber that reduces the DFR to a negligible level. The DFR analysis can be verified numerically and does not rely on independent assumptions about the decoding noise entries. Our findings suggest that M-LWE based cryptosystems can be significantly enhanced in sizes and communication efficiency through advanced techniques in quantization, ciphertext packing, and coding.

\bibliographystyle{splncs04}
\bibliography{LIUBIB}

%\section{Appendix}

\section{Definitions, Syntaxes, and Security Models for Multi-Recipient PKE and KEM, adapted from~\cite{MR_PKE2020}} \label{sec:mPKEmKEM}
In this Appendix, we first provide definitions of mKEM and mPKE. We then provide a generic transformation from mPKE to mKEM. 
\subsection{Decomposable Multi-Recipient Public Key Encryption} \label{sec:mPKE}
\noindent
\begin{definition} [Decomposable Multi-Recipient Public Key Encryption] \label{def:mPKE}
A \emph{(single-message) decomposable multi-recipient public key encryption ($\mPKE$)} over a message space $\MsgSpace$ and ciphertext spaces $\CtSpace$ and $\CtSpaceSin$ consists of the following five algorithms $\mPKE = (\mSetup, \mGen, \allowbreak \mEnc, \mExt, \mDec):$ 
\begin{itemize}
	\item $\mSetup(1^\secpar) \rightarrow \pp:$ The setup algorithm on input the security parameter $1^\secpar$ outputs a public parameter $\pp$. 

	\item $\mGen(\pp) \rightarrow (\pk, \sk):$ The key generation algorithm on input a public parameter $\pp$ outputs a pair of public key and secret key $(\pk, \sk)$. 
	
	\item $\mEnc(\pp, \tuple{\pk_i}_{i \in [N]}, \msg; \randnpk, \rand_1, \cdots, \rand_N) \rightarrow \vct = \tuple{\ctnpk, \tuple{\hct_i}_{i \in [N]}}:$ The (decomposable) encryption algorithm running with randomness $(\randnpk, \rand_1, \cdots, \rand_N)$, splits into a pair of algorithms $(\mEncA, \mEncB):$ 
		\begin{itemize}
			\item $\mEncA( \pp; \randnpk ) \rightarrow \ctnpk:$ On input a public parameter $\pp$ and randomness $\randnpk$, it outputs a (public key \emph{\underline{I}ndependent}) ciphertext $\ctnpk$. 
			\item $\mEncB( \pp, \pk_i, \msg; \randnpk, \rand_i) \rightarrow \hct_i:$ 	On input a public parameter $\pp$, a public key $\pk_i$, a message $\msg \in \MsgSpace$, and randomness $(\randnpk, \rand_i)$, it outputs a (public key \emph{\underline{D}ependent)} ciphertext $\hct_i$. 
		\end{itemize}
	
	\item $ \mExt(i, \vct) \rightarrow \ct_i = (\ctnpk, \hct_i) \text{ or } \bot:$ The deterministic extraction algorithm on input an index $i \in \NN$ and a (multi-recipient) ciphertext $\vct \in \CtSpace$, outputs either a (single-recipient) ciphertext $\ct_i = (\ctnpk, \hct_i) \in \CtSpaceSin$ or a special symbol  $\bot_\Ext$ indicating extraction failure. 
	
	\item $\mDec( \sk, \ct_i ) \rightarrow \msg \text{ or } \bot:$ The deterministic decryption algorithm on input a secret key $\sk$ and a ciphertext $\ct_i \in \CtSpaceSin$, outputs either $\msg \in \MsgSpace$ or a special symbol $\bot \not \in \MsgSpace$. 
\end{itemize}
\end{definition}

\begin{definition} [Correctness] \label{def:mPKE_correctness} 
A $\mPKE$ is $\delta$-correct if 
\begin{equation}
\delta \ge \EE \left[ \max_{\msg \in \MsgSpace} \Pr_{\randnpk, \rand} \left[
\begin{array}{ccccc}
& \ctnpk \lruns \mEncA(\pp; \randnpk), \hct \gets \mEncB(\pp, \pk, \msg; \randnpk, \rand): \\
&  \msg \neq \mDec( \sk, (\ctnpk, \hct) ) \end{array} \right] \right], 
%\delta_N \ge \EE \left[ \max_{\msg \in \MsgSpace} \Pr \left[
%\begin{array}{ccccc}
%& \vct \lruns \mEnc(\pp, \tuple{\pk_i}_{i \in [N]}), \tuple{ \ct_i \gets \mExt(i, \vct) }_{i \in [N]} \\
%& : \exists i \in [N] \text{ s.t. } \msg \neq \mDec( \sk_i, \ct_i ) \end{array} \right] \right], 
\end{equation}
where the expectation is also taken over $\pp \lruns \mSetup(1^\secpar)$ and $(\pk, \sk) \lruns \mGen(\pp)$.%
\end{definition}

\begin{definition} [$\gamma$-Spreadness] \label{def:spread}
Let $\mPKE$ be a decomposable multi-recipient $\PKE$ with message space $\MsgSpace$ and ciphertext spaces $\CtSpace$ and $\CtSpaceSin$.
For all $\pp \in \Setup(1^\secpar)$, and $(\pk, \sk) \in \Gen(\pp)$, define 
$$
	\gamma(\pp, \pk) := -\log_2 \left( \max_{\ct \in \CtSpaceSin, \msg\in \MsgSpace} \Pr_{\randnpk, \rand}\left[ \ct = \bigtuple{ \mEncA(\pp; \randnpk), \mEncB( \pp, \pk, \msg; \randnpk, \rand ) } \right]\right).
$$
We call $\mPKE$ $\gamma$-spread if $\EE[ \gamma(\pp, \pk) ] \ge \gamma$, where the expectation is taken over $\pp \lruns \mSetup(1^\secpar)$ and $(\pk, \sk) \lruns \mGen(\pp)$. 
\end{definition}

\begin{definition} [$\INDCPA$]
Let $\mPKE$ be a decomposable multi-recipient $\PKE$ with message space $\MsgSpace$ and ciphertext space $\CtSpace$. 
We define $\INDCPA$ by a game illustrated in \ref{fig:CPACCAgame} and say the (possibly quantum) adversary $\A =  (\A_1, \A_2)$ \emph{wins} if the game outputs~$1$. 
%For a (possibly quantum) adversary $\A$, 
We define the advantage of $\A$ against $\INDCPA$ security of $\mPKE$ parameterized by $N \in \NN$ as
%\ifnum\submission = 0
$$
\Adv^{\INDCPA}_{\mPKE, N} (\A)  = \left|   \Pr[ \A \text{ \emph{wins}} ] - 1/2 \right|. 
$$
% \else
% $
% \Adv^{\INDCPA}_{\mPKE, N} (\A)  = \left|   \Pr[ \A \text{ \emph{wins}} ] - 1/2 \right|. 
% $
% \fi
\end{definition}

\begin{figure}[htb!]
% IND-CPA
\begin{minipage}[t]{0.5\textwidth}
\underline{ \textbf{GAME $\INDCPA$} }
    \begin{algorithmic}[1]
    	\State $\pp \samp \mSetup(1^\secpar)$
	\For{$i \in [N]$}
	   	\State $(\pk_i, \sk_i) \samp \mGen(\pp)$
	\EndFor
        \State $(\msg^*_0, \msg^*_1, \state) \lruns \A_1( \pp, \tuple{ \pk_i }_{i \in [N]} )$
        \State $b \samp \bin$
        \State $\vct^* \samp \mEnc( \pp, \tuple{ \pk_i }_{i \in [N]}, \msg_b^*  )$
        \State $b' \lruns \A_2( \pp, \tuple{ \pk_i }_{i \in [N]}, \vct^*, \state )$
        \State \Return $[ b = b' ]$
    \end{algorithmic}
\end{minipage}
%
%\hfill
%
% IND-CCA
\begin{minipage}[t]{0.5\textwidth}
\underline{ \textbf{GAME $\INDCCA$} }
    \begin{algorithmic}[1]
    	\State $\pp \samp \mSetup(1^\secpar)$
	\For{$i \in [N]$}
	   	\State $(\pk_i, \sk_i) \samp \mGen(\pp)$
	\EndFor
        \State $(\key_0^*, \vct^*) \lruns \mEncaps( \pp,  \tuple{ \pk_i }_{i \in [N]})$
        \State $\key_1^* \samp \KeySpace$
        \State $b \samp \bin$
        \State $b' \lruns \A^\OD( \pp, \tuple{ \pk_i }_{i \in [N]}, \vct^*, \key_b^*)$
        \State \Return $[ b = b' ]$
    \end{algorithmic}
\end{minipage}

% Decap Oracle
%\medskip
\begin{center}
\begin{minipage}[t]{0.5\textwidth}
\underline{ Decapsulation Oracle~$\OD( i, \ct )$ }
    \begin{algorithmic}[1]
	\State $\ct_i^* := \mExt( i, \vct^* )$
    	\If{ $\ct = \ct_i^*$ }
		\State \Return $\bot$
	\EndIf
%    	\If{ $\ct_i^* = \bot_\Ext$ }
%		\State \bReturn $\bot$
%	\EndIf
    	\State $ \key := \mDecaps( \sk_i, \ct ) $
	\State \Return $\key$ 
    \end{algorithmic}
\end{minipage}
\end{center}
\caption{ $\INDCPA$ of $\mPKE$ and $\INDCCA$ of $\mKEM$. }\label{fig:CPACCAgame}
\end{figure}

\subsection{Multi-Recipient Key Encapsulation Mechanism} \label{sec:mKEM}
\begin{definition} [Multi-Recipient Key Encapsulation Mechanism] \label{def:mKEM}
A \emph{(single-message) multi-recipient key encapsulation mechanism ($\mKEM$)} over a key space $\KeySpace$ and ciphertext space $\CtSpace$ consists of the following five algorithms $\mKEM = (\mSetup, \mGen, \mEncaps, \allowbreak \mExt, \mDecaps):$ 
\begin{itemize}
	\item $\mSetup(1^\secpar) \rightarrow \pp:$ The setup algorithm on input the security parameter $1^\secpar$ outputs a public parameter~$\pp$. 

	\item $\mGen(\pp) \rightarrow (\pk, \sk):$ The key generation algorithm on input a public parameter $\pp$ outputs a pair of public key and secret key $(\pk, \sk)$. 
		
	\item $\mEncaps(\pp, \tuple{\pk_i}_{i \in [N]}) \rightarrow (\key, \vct):$ The encapsulation algorithm on input a public parameter $\pp$, and $N$%-pairs of 
	                                                                                                              public keys $\tuple{\pk_i}_{i \in [N]}$, outputs a key $\key$ and a ciphertext $\vct$. 
	
	\item $ \mExt(i, \vct) \rightarrow \ct_i:$ The deterministic extraction algorithm on input an index $i \in \NN$ and a ciphertext $\vct$, outputs either $\ct_i$ or a special symbol $\bot_\Ext$ indicating extraction failure. 
	
	\item $\mDecaps( \sk, \ct_i ) \rightarrow \key \text{ or } \bot:$ The deterministic decryption algorithm on input a secret key $\sk$ and a ciphertext $\ct_i$, outputs either $\key \in \KeySpace$ or a special symbol $\bot \not \in \KeySpace$. 
\end{itemize}

\end{definition}

\begin{definition} [Correctness] \label{def:mKEM_correctness} 
A $\mKEM$ is $\delta_N$-correct if
\begin{equation*}
\delta_N \ge  \Pr\left[ \begin{array}{c} (\key, \vct) \lruns \mEnc(\pp, \tuple{\pk_i}_{i \in [N]}) ,  \tuple{ \ct_i \lruns \mExt(i, \vct)}_{i \in [N]}; 
 \\ \exists i \in [N] ~s.t.~ \key \neq \mDec( \sk, \ct_i ) \end{array} \right],
\end{equation*}
where the probability is also taken over $\pp \lruns \mSetup$ and $(\pk_i, \sk_i) \lruns \mGen(\pp)$ for all $i \in [N]$. 
%Here $\delta_N$ can be parameterized by $N \in \NN$.
\end{definition}

\begin{comment}
\begin{align*}
\delta_N &\ge  \Pr\left[  (\key, \vct) \lruns \mEnc(\pp, \tuple{\pk_i}_{i \in [N]}) ,\right.   \tuple{ \ct_i \lruns \mExt(i, \vct)}_{i \in [N]} 
\left. \right], \\
&\exists i \in [N] ~s.t.~ \key \neq \mDec( \sk, \ct_i ),
\end{align*}  
\end{comment}

\begin{definition} [$\INDCCA$]
Let $\mKEM$ be a multi-recipient $\KEM$. We define $\INDCCA$ by a game illustrated in \ref{fig:CPACCAgame} and say the (possibly quantum) adversary $\A$ %=  (\A_1, \A_2)$ 
(making only classical decapsulation queries to $\OD$) \emph{wins} if the game outputs $1$. 
%For quantum adversaries $\A$, 
We define the advantage of $\A$ against $\INDCCA$ security of $\mKEM$ parameterized by $N \in \NN$ as
%\ifnum \submission = 0
$$
\Adv^{\INDCCA}_{\mKEM, N} (\A)  = \left|  \Pr[ \A \text{ \emph{wins}} ] - 1/2 \right|. 
$$
% \else
% $
% \Adv^{\INDCCA}_{\mKEM, N} (\A)  = \left|  \Pr[ \A \text{ \emph{wins}} ] - 1/2 \right|. 
% $
% \fi
\end{definition}

% \subsection{FO Transform: ($\INDCPA$ $\mPKE$) $\Rightarrow$ ($\INDCCA$ $\mKEM$)} \label{sec:FOTransform}

\subsubsection{Generic Construction via FO Transform} \label{sec:generic_INDCCA_KEM}
The authors of~\cite{MR_PKE2020} provided a generic transformation of an $\INDCPA$ secure $\mPKE$ to an $\INDCCA$ secure $\mKEM$ using a generalized Fujisaki-Okamoto transform, see Fig.~\ref{fig:FO}.

\begin{figure}[htb!]
% Setup
\begin{minipage}[t]{0.30\textwidth}
\underline{ $\mSetup(1^\secpar) $}
    \begin{algorithmic}[1]
    \State $ \pp \lruns \mSetupp(1^\secpar) $
	\State \Return $\pp$
    \end{algorithmic}    
\end{minipage}
\hfill
\begin{minipage}[t]{0.35\textwidth}
\underline{ $\mGen( \pp )$ }
    \begin{algorithmic}[1]
    	\State $ (\pk, \skp) \lruns \mGenp(\pp) $
	\State $\seed \lruns \bin^\ell$
	\State $\sk := (\skp, \seed)$
	\State \Return $(\pk, \sk)$
    \end{algorithmic}
\end{minipage}
\hfill
\begin{minipage}[t]{0.30\textwidth}
\underline{ $\mExt( i, \vct )$ }
    \begin{algorithmic}[1]
    	\State $ \ct_i \lruns \mExtp(i, \vct) $
	\State \Return $\ct_i$
    \end{algorithmic}
\end{minipage}

\bigskip

\begin{minipage}[t]{0.5\textwidth}
\underline{ $\mEncaps( \pp, \tuple{\pk_i}_{i \in [N]} )$ }
    \begin{algorithmic}[1]
    	\State $\msg \samp \MsgSpace$
        \State $\ctnpk := \mEncA( \pp;  \OG_1(\msg) )$		
	\For{$i \in [N]$}
	   	\State {$\hct_i := \mEncB( \pp, \pk_i, \msg; \OG_1(\msg), \OG_2(\pk_i, \msg)  )$}
	\EndFor
        \State $\key := \OH(\msg)$
        \State \Return $(\key, \vct := (\ctnpk, \tuple{\hct_i}_{i \in [N]}))$
    \end{algorithmic}
\end{minipage}
\hfill
%
% Decaps
\begin{minipage}[t]{0.55\textwidth}
\underline{ $\mDecaps( \sk, \ct )$ }
    \begin{algorithmic}[1]
    	\State $ \sk := (\skp, \seed) $
    	\State $\msg := \mDec( \skp, \ct )$
	\If{ $\msg = \bot$}
		\State \Return $\key := \OH'(\seed, \ct) $
        	\EndIf
	\State $\ctnpk := \mEncA( \pp; \OG_1(\msg) )$
	\State  $\hct := \mEncB( \pp, \pk, \msg;\OG_1(\msg), \OG_2(\pk, \msg)  )$	
	\If{ $\ct \neq (\ctnpk, \hct)$ }
		\State \Return $\key := \OH'(\seed, \ct) $
	\Else 
	        \State \Return $\key := \OH( \msg )$
        	\EndIf
    \end{algorithmic}
\end{minipage}
%
% Caption and Label
\caption{ An $\INDCCA$ secure $\mKEM$ from a decomposable $\INDCPA$ secure $\mPKE = (\mSetupp, \mGenp, \mEnc = (\mEncA, \mEncB), \mExtp, \mDec)$. We include the superscript $^{\sf p}$ to make the code more readable. %Oracle $\OG_1, \OG_2, \OH, \OH'$ are hash functions modeled as random oracles in the security proof. 
}\label{fig:FO}
\end{figure}

% The following theorem %allows to 
% classically and quantumly reduce the $\INDCCA$ security of $\mKEM$ to the $\INDCPA$ security of $\mPKE$, where the classical reduction is tight. 
% The proof for each theorem is provided in the subsequent sections. We note that correctness of our $\mKEM$ trivially holds from the correctness of the underlying $\mPKE$. 
%The proof is essentially a generalization of the proof in \cite{EC:SaiXagYam18,OURSPKC:HKSU20}, which concerns standard $\KEM$. 
%Other than the difference of handling multi-recipients, our scheme slightly differs from \cite{EC:SaiXagYam18,OURSPKC:HKSU20} in how the randomness for encryption is generated. 
%Namely, $\mEncB$ on input one of the recipient's public key $\pk_i$ generates a randomness $\OG_2(\pk_i, \msg)$ tied to the specific $\pk_i$. (Note that this is required to base security on the randomized $\INDCPA$ security of the underlying $\mPKE$.)
%We will see that this slight modification essentially has no affect on the overall proof of \cite{EC:SaiXagYam18,OURSPKC:HKSU20}. 

\begin{theorem} [$\INDCPA$ $\mPKE$ $\Rightarrow$ $\INDCCA$ $\mKEM$, adapted from~\cite{MR_PKE2020}] \label{thm:classical_proof_FO}
%Assume $\mPKE$ is $\delta_N$-correct and $\gamma$-spread. 
Assume $\mPKE$ with message space $\MsgSpace$ is $\delta$-correct and $\gamma$-spread.
Then, for any classical PPT $\INDCCA$ adversary $\A$ issuing at most $\qD$ queries to the decapsulation oracle $\OD$, a total of at most $\qG$ queries to $\OG_1$ and $\OG_2$, and at most $\qH$ and $\qH'$ queries to $\OH$ and $\OH'$, there exists a classical PPT adversary $\BIND$ such that 
%\begin{align*}
%	\Adv^{\INDCCA}_{\mKEM, N}( \A ) & \le \Adv^{\DS}_{\mPKE, N}(\BDS) + \Adv^{\INDCPA}_{\mPKE, N}(\BIND) \\
%	& + (2\qG + \qD + 2)\cdot\delta_N + \qD \cdot 2^{-\gamma} + \frac{(\qG + \qH)}{|\MsgSpace|} 
%\end{align*}
%where the running time of $\BDS$ and $\BIND$ is about that of $\A$. 
\begin{align*}
	\Adv^{\INDCCA}_{\mKEM, N}( \A )  &\le  2 \cdot \Adv^{\INDCPA}_{\mPKE, N}(\BIND) + (2\qG + \qD + 2)\cdot\delta \\
    &+ \qD \cdot 2^{-\gamma} + \frac{(\qG + \qH)}{|\MsgSpace|}  + \qH' \cdot N \cdot 2^{-\ell}. 
\end{align*}
where the running time of $\BIND$ is about that of $\A$, and $\ell$ is bit-length of the seed included in the private key. 
\end{theorem}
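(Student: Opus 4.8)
The plan is to follow the standard modular analysis of the Fujisaki--Okamoto transform, adapted to the multi-recipient setting, via a sequence of games $G_0,\dots,G_3$ starting from the $\INDCCA$ experiment of $\mKEM$ in Fig.~\ref{fig:CPACCAgame} and ending in a game where the challenge key is information-theoretically independent of $\A$'s view. All random oracles $\OG_1,\OG_2,\OH,\OH'$ are simulated by lazy sampling, and I track the statistical or computational distance introduced at each hop.

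First, in $G_0\to G_1$ I would decouple implicit rejection from the global oracle $\OH'$: inside each $\mDecaps(\sk_i,\cdot)$ replace $\OH'(\seed_i,\cdot)$ by a challenger-internal random function $\OH'_i(\cdot)$. Since each $\seed_i\samp\bin^\ell$ is used nowhere else, $\A$ notices only if it queries $\OH'$ at some $(\seed_i,\ast)$; a union bound over $N$ recipients and $\qH'$ queries gives a loss of at most $N\cdot\qH'\cdot 2^{-\ell}$. Next, in $G_1\to G_2$ I would make the decapsulation oracle independent of all secret keys by the usual re-encryption/plaintext-checking simulation: on $\OD(i,\ct)$, scan the recorded $\OG_1,\OG_2$ queries for a message $\msg$ whose re-encryption under $\pk_i$ equals $\ct$, returning $\OH(\msg)$ if found and $\OH'_i(\ct)$ otherwise. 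This is perfectly indistinguishable except when (i) some queried ciphertext is a genuine encryption of a message not yet sent to $\OG$ --- bounded by $\qD\cdot 2^{-\gamma}$ using $\gamma$-spreadness --- or (ii) a decryption-failure event of $\mPKE$ causes the real and simulated oracles to disagree; charging $\delta$-correctness once per fresh $\OG$-queried message (for each of the at most two re-encryption calls $\mEncA,\mEncB$ it can trigger), once per decapsulation query, and once for the challenge encapsulation together with its derived key, yields the $(2\qG+\qD+2)\cdot\delta$ term.

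Then, in $G_2\to G_3$, I would replace the challenge plaintext $\msg^*\samp\MsgSpace$ used to form $\vct^*$ by an encryption of a fresh independent message, while still setting $\key_0^*:=\OH(\msg^*)$ for the original $\msg^*$. Because the challenger in $G_2$ no longer touches any $\sk_i$, an efficient distinguisher between $G_2$ and $G_3$ immediately yields an $\INDCPA$ adversary $\BIND$ against $\mPKE$ that embeds its own challenge ciphertext as $\vct^*$ and runs the patched $\OD$; the factor two is the usual left-or-right to find-then-guess conversion, so this hop costs $2\cdot\Adv^{\INDCPA}_{\mPKE,N}(\BIND)$. Finally, in $G_3$ the value $\msg^*$ is uniform in $\MsgSpace$ and appears nowhere in $\A$'s view, so $\key_0^*=\OH(\msg^*)$ is uniformly random and independent of $b$ unless $\A$ queries $\msg^*$ to $\OH$ or to $\OG_1/\OG_2$ --- an event of probability at most $(\qG+\qH)/|\MsgSpace|$ --- whence $\Pr[b=b']=1/2$ conditioned on its complement. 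Summing the losses along $G_0,\dots,G_3$ gives exactly the stated bound.

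I expect the main obstacle to be the careful accounting in the $G_1\to G_2$ step: one must argue simultaneously that the key-free decapsulation oracle returns identical answers on all well-formed ciphertexts, that $\gamma$-spreadness exactly captures the ``accepted via a not-yet-queried message'' case, and that each correctness error is billed once, so that the multiplicities $2\qG$, $\qD$ and the additive constant $2$ emerge precisely as in the claim. The genuinely multi-recipient content is light --- it enters only through the per-recipient seeds (the factor $N$ in the $\OH'$ term) and through $\mEncB$ being invoked once per public key --- so once that bookkeeping is pinned down the remaining reprogramming arguments are routine and follow~\cite{MR_PKE2020,modFOT2017}.
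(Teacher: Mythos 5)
The paper does not actually prove this theorem itself---it is imported (as the heading says, ``adapted from'') from the multi-recipient FO analysis of~\cite{MR_PKE2020}, which in turn follows the modular game-hopping argument of~\cite{modFOT2017}. Your sketch reproduces essentially that same argument---decoupling the per-recipient implicit-rejection seeds (the $\qH' \cdot N \cdot 2^{-\ell}$ term), simulating decapsulation without secret keys via re-encryption and plaintext checking (the $\qD \cdot 2^{-\gamma}$ and $(2\qG+\qD+2)\cdot\delta$ terms), embedding the $\INDCPA$ challenge with the factor $2$, and the final $(\qG+\qH)/|\MsgSpace|$ guessing bound---so it is the same approach and, at the level of a sketch, correct; the only part that would need genuine care in a full write-up is the multiplicity bookkeeping for the $\delta$ term, which you rightly flag.
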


\begin{theorem} [$\INDCPA$ $\mPKE$ $\Rightarrow$ $\INDCCA$ $\mKEM$, adapted from~\cite{MR_PKE2020}] \label{thm:quantum_proof_FO}
Assume $\mPKE$ with message space $\MsgSpace$ is $\delta$-correct and $\gamma$-spread.
Then, for any quantum PT $\INDCCA$ adversary $\A$ issuing at most $\qD$ classical queries to the decapsulation oracle $\OD$, a total of at most $\qG$ quantum queries to $\OG_1$ and $\OG_2$, and at most $\qH$ and $\qH'$ quantum queries to $\OH$ and $\OH'$, there exists a quantum PT adversary $\BIND$ such that 
\begin{align*}
	\Adv^{\INDCCA}_{\mKEM, N}( \A ) &\le   \sqrt{8\cdot  (\qG + 1)\cdot \Adv^{\INDCPA}_{\mPKE, N}(\BIND) }+ \frac{8\cdot (\qG + 1)}{\sqrt{|\MsgSpace|} }  \\
	 &\quad + 12\cdot(\qG + \qD + 1)^2 \cdot \delta_N + \qD\cdot 9\sqrt{2^{-\gamma}} + 9 \cdot 2^{\frac{-\mu}{2}}+ \qH' \cdot N \cdot 2^{\frac{-\ell+1}{2}}, 
\end{align*}
where the running time of  $\BIND$ is about that of $\A$, $\ell$ is bit-length of the seed included in the private key, and $\mu = \abs{\randnpk} + \abs{\rand}$ for $(\randnpk, \rand) \in \RandSpace$ where $\RandSpace$ is the randomness space of $\mPKE$ for a single ciphertext. 
\end{theorem}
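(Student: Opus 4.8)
The plan is to view $\mKEM$ as the image of our decomposable $\mPKE$ under the generalized Fujisaki--Okamoto transform of Fig.~\ref{fig:FO}, and to invoke the modular QROM analysis of~\cite{MR_PKE2020}, which factors that transform as a derandomizing step $T$ (turning the $\INDCPA$-secure $\mPKE$ into a deterministic scheme equipped with a re-encryption check) followed by the implicit-rejection step $U^{\not\perp}$. Since P$_{t,\ell}$-mPKE is decomposable by construction --- $\mEncA$ outputs $\ctnpk=\bu=Q_{\mathsf{MMSE},2^{d_u}}(\bA^{T}\br+\mathbf{e}_1)$ and each $\mEncB$ outputs $\hct_i=\mathbf{v}_i$, so that all $\ell$ layers and all recipients reuse the single randomness produced by $\OG_1$ (Algorithm~\ref{alg:kyberMR_enc}) --- the generic theorem of~\cite{MR_PKE2020} applies with no structural change, and what remains is to instantiate its three hypotheses for our scheme: decomposability (immediate), $\delta$-correctness, and $\gamma$-spreadness.

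First I would fix the correctness parameter as $\delta_N\le N\cdot\delta_{t,\ell}$: a union bound over the $N$ recipients reduces an mKEM decapsulation failure to a single per-recipient decoding failure, and the latter is bounded by~\eqref{KyberTP_DFR_LE}. Second I would lower-bound the spreadness parameter: $2^{-\gamma(\pp,\pk)}$ is the largest probability, over the encryption randomness, of producing a fixed ciphertext $(\bu,\mathbf{v})$ for \emph{some} message, and since $\bu$ already carries large min-entropy from $(\br,\mathbf{e}_1)$ together with the MMSE rounding --- and each $\mathbf{v}_i$ adds further entropy from $\mathbf{e}_2$ and rounding --- one obtains a value of $\gamma$ large enough that $\qD\cdot 9\sqrt{2^{-\gamma}}$ is cryptographically negligible at the Kyber parameter sets. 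With these two quantities fixed, the displayed inequality is exactly the conclusion of~\cite{MR_PKE2020}, whose summands arise as follows. The one-way-to-hiding (O2H) lemma reduces distinguishing the genuine key $\OH(\msg^*)$ from a uniform key to extracting $\msg^*$ from the challenge ciphertext $\vct^*$, yielding the leading term $\sqrt{8(\qG+1)\cdot\Adv^{\INDCPA}_{\mPKE,N}(\BIND)}$ and the guessing term $8(\qG+1)/\sqrt{|\MsgSpace|}$; the derandomization/correctness step costs $12(\qG+\qD+1)^{2}\cdot\delta_N$, the square being the generic QROM price of reprogramming $\OG_1,\OG_2$ on the set of failing plaintexts; $\gamma$-spreadness costs $\qD\cdot 9\sqrt{2^{-\gamma}}$ for rejecting malformed ciphertexts inside a secret-key-free simulation of $\OD$; unpredictability of the challenge randomness of bit-length $\mu=|\randnpk|+|\rand|$ costs $9\cdot 2^{-\mu/2}$; and pseudorandomness of the implicit-rejection key $\OH'(\seed,\ct)$, which rests on the $N$ recipients' independent $\ell$-bit seeds remaining hidden, costs $\qH'\cdot N\cdot 2^{(-\ell+1)/2}$ by a Grover-type search bound.

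The main obstacle is not a new idea but the multi-recipient bookkeeping inside the O2H reduction: the challenge ciphertext $\vct^*=(\ctnpk^*,(\hct_i^*)_{i\in[N]})$ shares the randomness $\OG_1(\msg^*)$ across all layers while each $\hct_i^*$ additionally uses $\OG_2(\pk_i,\msg^*)$, so the extractor $\BIND$ must (i) answer all $N$ decapsulation queries without any secret key $\mathbf{S}_i$ --- possible precisely because decapsulation becomes a deterministic function of $\OG_1,\OG_2,\OH,\OH'$ once the re-encryption check is installed --- and (ii) be argued to succeed with only the stated $N$-linear loss, the decomposable structure keeping the recipient hybrid tight. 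Since our $\mPKE$ satisfies every hypothesis of~\cite{MR_PKE2020} and the transform in Fig.~\ref{fig:FO} is theirs verbatim, invoking their theorem completes the argument; one may finally combine the resulting bound with the IND-CPA security lemma for P$_{t,\ell}$-mPKE to express it in terms of the M-LWE hardness advantage, and the sole computation specific to our construction is the concrete value of $\gamma$ at each Kyber parameter set.
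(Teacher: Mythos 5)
Your proposal is correct and takes essentially the same route as the paper: the theorem is stated as ``adapted from~\cite{MR_PKE2020}'' and the paper gives no proof of its own, relying entirely on the generic QROM analysis of the generalized FO transform in that reference, which is exactly what you invoke. Your additional sketch of where each summand originates and how $\delta_N$ and $\gamma$ would be instantiated for P$_{t,\ell}$-mPKE is plausible supplementary detail, not a departure from the paper's (citation-only) argument.
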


\end{document}